\renewcommand*\@fnsymbol[1]{\the#1}
\theoremstyle{plain}
\newtheorem{theorem}{Theorem}[section]
\newtheorem{lemma}[theorem]{Lemma}
\newtheorem{proposition}[theorem]{Proposition}
\theoremstyle{definition}
\newtheorem{definition}[theorem]{Definition}
\theoremstyle{remark}
\newtheorem{remark}[theorem]{Remark}
\newtheorem{example}[theorem]{Example}
\def\cA{\mathcal{A}}
\def\cF{\mathcal{F}}
\def\cD{\mathcal{D}}
\def\cG{\mathcal{G}}
\def\cC{\mathcal{C}}
\def\cX{\mathcal{X}}
\def\R{\mathbb{R}}
\def\P{\mathbb{P}}
\def\E{\mathbb{E}}
\def\N{\mathbb{N}}
\def\e{\varepsilon}
\def\indic{1}
\DeclareMathOperator*{\esssup}{ess\,sup}
\newcommand{\lnorm}{\left\|}
\newcommand{\rnorm}{\right\|}
\newcommand{\probp}{\P}
\newcommand{\stle}{\ge_{\rm st}}
\newcommand{\closure}{\mathop{\rm cl}\nolimits}
\newcommand{\VaR}{\mathop {\rm VaR}\nolimits}
\newcommand{\ES}{\mathop {\rm ES}\nolimits}
\newcommand{\rec}{\mathop {\rm rec}\nolimits}
\newcommand{\Rec}{\mathop {\rm rec}\nolimits}
\newcommand*{\longhookrightarrow}{\ensuremath{\lhook\joinrel\relbar\joinrel\rightarrow}}
\title{Diversification, protection of liability holders and\\
regulatory arbitrage}
\author{\sc{Pablo Koch-Medina}\thanks{Email: \texttt{pablo.koch@bf.uzh.ch}}\,, \sc{Cosimo Munari}\,\thanks{Email: \texttt{cosimo.munari@bf.uzh.ch}}\,, \sc{Mario \v{S}iki\'c}\,\thanks{Email: \texttt{mario.sikic@bf.uzh.ch}}}
\affil{Center for Finance and Insurance, University of Zurich, Switzerland}
\begin{document}
\maketitle

\parindent 0em \noindent

\begin{abstract}
\noindent Any solvency regime for financial institutions should be aligned with the fundamental objectives of regulation: protecting liability holders and securing the stability of the financial system. The first objective leads to consider surplus-invariant capital adequacy tests, i.e. tests that do not depend on the surplus of a financial institution. We provide a complete characterization of closed, convex, surplus-invariant capital adequacy tests that highlights an inherent tension between surplus-invariance and the desire to give credit for diversification. The second objective leads to requiring consistency of capital adequacy tests across jurisdictions. Of particular importance in this respect are capital adequacy tests that remain invariant under a change of num\'{e}raire. We establish an intimate link between surplus- and num\'{e}raire invariant tests.
\end{abstract}


\section{Introduction}

One of the major advances in the regulation of financial institutions, be it banks or insurance companies, has been the introduction of risk-sensitive solvency regimes. Examples are the Basel Accord in the banking sector and Solvency II and the Swiss Solvency Test in the insurance sector. Let us recall the typical mathematical framework of risk-sensitive solvency regimes which we will call the ``classical model'' in this paper; see for instance Artzner, Delbaen, Eber \& Heath \cite{ADEH1999}, F\"{o}llmer \& Schied \cite{FoellmerSchied2011} or, for an account in the spirit of this paper, Farkas, Koch-Medina \& Munari \cite{FarkasKochMunari2014}. {\em Capital positions} -- assets net of liabilities -- of financial institutions are assumed to belong to a space $\cX$ of random variables on some probability space $(\Omega,\cF,\probp)$ representing the {\em future states of the economy}. At any state $\omega\in\Omega$, an institution with capital position $X\in\cX$ will be able to meet its obligations whenever $X(\omega)\ge0$ and will default whenever $X(\omega)<0$. A financial institution is deemed to be adequately capitalized if its capital position belongs to a pre-specified subset $\cA$ of $\cX$, called the {\em acceptance set} or the {\em capital adequacy test}. Finally, {\em risk measures} describe the minimum cost of meeting the capital adequacy test by raising capital and investing it in a reference instrument, often assumed to be cash.

\subsubsection*{Microprudential regulation and surplus invariance}

Capital adequacy tests are primarily an instrument of microprudential regulation, i.e. their main purpose is to help protect liability holders. From this perspective it would seem natural for capital adequacy tests to be {\em surplus invariant}, i.e. for a financial institution with capital position $X$ the size of the {\em surplus}
\[
X^+ := \max\{X,0\}\,,
\]
which benefits only the institution's owners, should have no impact on whether the institution passes or fails the test. In other words, acceptability should only depend on the default option
\[
X^- := \max\{-X,0\}
\]
which, in case of a company with limited liability, represents the difference between the contractual and the actual liability payment. Formally, this entails requiring that
\[
X\in\cA, \ \ Y\in\cX, \ \ Y^-\leq X^- \ \implies \ Y\in\cA\,.
\]

\smallskip

While surplus invariance is a reasonable requirement, it is also reasonable --- and in the interest of liability holders --- that capital requirements give credit for diversification. This is because an institution that diversifies its risk exposures can reduce the costs related to holding capital, costs that are ultimately borne by liability holders. It is well-known that, for an acceptance set, the financial requirement of giving credit for diversification is captured by the mathematical property of {\em convexity}; see for instance F\"{o}llmer \& Schied \cite{FoellmerSchied2002} or Frittelli \& Rosazza-Gianin \cite{FRG2002}. In this paper we provide a characterization of convex, surplus-invariant acceptance sets which, as explained further down, will highlight an interesting and important tension that exists between the requirement that capital adequacy tests be surplus invariant and that, at the same time, they give credit for diversification.

\subsubsection*{Macroprudential regulation and num\'{e}raire invariance}

While capital adequacy tests are part of the toolkit of microprudential regulation, they should ideally also support or, at the very least, not undermine macroprudential regulation, whose objective is to secure the stability of the financial system. By its very nature, the effectiveness of macroprudential regulation depends on how well harmonized regulatory frameworks across boundaries are. In an ideal world, no regulatory capital arbitrage opportunities would exist, i.e. it should no be possible for a financial institution to pass from being unacceptable to being acceptable by merely changing to a different jurisdiction. This implies that capital adequacy tests across boundaries should just be ``translations'' of each other by the appropriate exchange rate: Given capital adequacy tests $\cA_1$ and $\cA_2$ in two different currencies, one should have $\cA_2=R\cA_1$, where $R$ is the (stochastic) exchange rate from currency $1$ into currency $2$. However, in today's global regulatory system this translation step is not done: Each jurisdiction uses the ``same'' capital adequacy test --- based on Value-at-Risk ($\VaR$) or Expected Shortfall ($\ES$) --- but applies it in its own currency. As explained in Koch-Medina \& Munari~\cite{KM2016}, $\VaR$-based tests (leaving aside all of their potential deficiencies) would lead to a consistent global regulatory regime but $\ES$ based tests (leaving aside all of their potential merits) would not. While $\ES$-based tests are convex (indeed coherent), $\VaR$-based tests fail to be convex. Therefore it is interesting to ask whether there exist any convex capital adequacy tests that could be used in any currency without requiring translation. These capital adequacy tests will be called {\em num\'{e}raire invariant} and their formal definition is as follows: For every change of currency (or, more generally, num\'{e}raire), represented by a bounded random variable $R$ that is strictly positive almost surely, we require that
\[
X\in\cA \ \implies \ RX\in\cA\,.
\]
As it transpires, num\'{e}raire invariance and surplus invariance are intimately related. In fact, convex, num\'{e}raire-invariant capital adequacy tests coincide with coherent, surplus-invariant tests.

\subsubsection*{Structure of the paper and main results}

The setting of our paper is sufficiently general to cover the typical choices for the ambient space $\cX$ encountered in the literature. Indeed, we will work in an ambient space $\cX$ belonging to a class of spaces of random variables that includes all Orlicz hearts (and, hence, all $L^p$ spaces for $1\leq p<\infty$), $L^\infty$ when equipped with the weak-star topology, but also the non-locally convex spaces $L^p$ for $0\leq p<1$.

\smallskip

Section~3 contains our main results. After introducing and proving some basic properties of surplus-invariant acceptance sets, we turn to their characterization in case they are closed and convex. Theorem~\ref{theorem: dual repr convex xs acc} provides a dual representation that relies on the fact that any closed, surplus-invariant acceptance set $\cA\subset \cX$ can be recovered by taking the closure of its intersection with $L^\infty$. This allows us to obtain a ``dual'' representation even if the ambient space is not locally convex, e.g. an $L^p$ space for $0\le p<1$. Armed with this representation we first provide in Lemma~\ref{thm: coherent structures} a generalization to our setting of a result by Koch-Medina, Moreno-Bromberg \& Munari \cite{KMM2015} characterizing weak-star closed, surplus-invariant acceptance sets on $L^\infty$ that are coherent, i.e. convex cones. The result states that a closed, coherent acceptance set is surplus invariant if and only if we find a measurable set $A$ with strictly positive probability such that
\begin{equation}
\label{equivalence coherent intro}
X\in\cA \ \iff \ 1_AX\geq0
\end{equation}
holds for every $X\in\cX$. In other words, closed, coherent, capital adequacy tests are surplus-invariant if and only if we impose no restrictions on $A^c$ and disallow any defaults in the stress scenarios, i.e.  in the states belonging to $A$. The problem with these capital adequacy tests is that either $\probp(A)<1$ and they ignore what happens outside $A$, or $\probp(A)=1$ and the tests disallow defaults in all states of the world.

\smallskip

The characterization in \eqref{equivalence coherent intro} is used to show in Theorem \ref{main theorem} that if we drop coherence but retain closedeness and convexity, $\cA$ is surplus invariant if and only if we find a partition $\{A,B,C\}$ of $\Omega$ consisting of measurable sets such that
\begin{equation}
\label{equivalence intro}
X\in\cA \ \iff \ 1_AX\geq0 \ \ \,\mbox{and} \ \ -1_BX^-\in\cD_B
\end{equation}
for every $X\in\cX$, where $\cD_B$ is a closed, convex set that is tight, i.e. bounded in probability. The decomposition of $\Omega$ into the three classes of scenarios implied by the partition $\{A,B,C\}$ has a clear financial interpretation: A financial institution is adequately capitalized if and only if
\begin{itemize}
  \item no default is allowed on $A$,
  \item a ``controlled'' form of default is allowed on $B$,
  \item no constraint is required on $C$.
\end{itemize}
Incidentally, Example~3.3 shows that the above characterization does not hold if we equip $L^\infty$ with the strong topology.

\smallskip

To obtain a better sense of just how controlled defaults must be on the scenarios belonging to $B$, we can exploit a characterization of tightness, given in Proposition \ref{prop: tight and stoch bounded}, in terms of stochastic boundedness with respect to first order stochastic dominance. As a result we can show that there exists a random variable $X^\ast\in L^0$ such that
\[
\VaR_\alpha(1_BX) \leq \VaR_\alpha(X^\ast) \ \ \mbox{for all $X\in\cA$}, \ \alpha\in(0,1)\,,
\]
where the {\em Value-at-Risk} of any $X\in L^0$ is defined as usual by
\[
\VaR_\alpha(X) := \inf\{t\in\R \,; \ \probp(X+t<0)\leq\alpha\}\,.
\]
It follows that, on $B$, the $\VaR$ of every acceptable position is rigidly controlled by the $\VaR$ of a single bounding random variable $X^\ast$ for every level $\alpha\in(0,1)$.

\smallskip

In Section~4 we discuss num\'{e}raire invariance and show that it is intimately related to surplus invariance. Indeed, in Proposition \ref{proposition: numinv is equiv to supinv} we show that a closed capital adequacy test is num\'{e}raire invariant if and only if it is a conical, surplus-invariant acceptance set. Hence, it follows that, under closedness, if we simultaneously require convexity and num\'{e}raire invariance we end up allowing no defaults on a measurable set $A$ and imposing no constraints on the complement $A^c$.

\smallskip

The limited choice of convex capital adequacy tests that satisfy either surplus or num\'{e}raire invariance reflects the tension we mentioned at the beginning of the introduction: Convexity reduces num\'eraire invariant capital adequacy tests to simple stress tests. Thus, if we want to avoid having states of the world in which financial institutions are completely unconstrained, we are left with a single possible test that is passed only by those institutions that will never default. This tension lessens considerably if we drop num\'{e}raire invariance while retaining surplus-invariance: Convexity now allows for tests where controlled defaults are possible and there are no uncontrolled scenarios. In Example \ref{cor: construction of surplus invariance via ES} we provide a constructive way to define convex, surplus-invariant capital adequacy tests based on Expected Shortfall that have this property.

\smallskip

In a final section, we qualify the implications of our results by suggesting that they may point to a deficiency of the classical model. Indeed, due to its rudimentary nature, the classical model does not allow for the possibility that an institution running a surplus may provide a benefit to its liability holders or society at large. Enhancing the classical model to accommodate such features may constitute a potential area for future research.

\smallskip

Our work generalizes results obtained recently in Koch-Medina, Moreno-Bromberg \& Munari \cite{KMM2015} for coherent acceptance sets in the context of spaces of bounded positions in two ways. First, this paper goes beyond coherence and allows for convex acceptance sets and, second, we allow a more general class of ambient spaces. As discussed in detail in that paper, the notion of surplus-invariance is related to similar notions discussed by Staum \cite{Staum2013} and by Cont, Deguest \& He \cite{ContDeguestHe2013}.


\section{Financial positions and acceptance sets}

Throughout this paper, $(\Omega,\cF,\probp)$ will denote a fixed probability space. As usual, we will identify random variables that coincide almost surely (with respect to $\probp$). The indicator function of a set $A\in\cF$ is denoted by $1_A$. The space $L^0:= L^0(\Omega,\cF,\probp)$ is endowed with the natural almost sure pointwise ordering, i.e. for $X,Y\in L^0$ we write $Y\ge X$ whenever $\probp(Y\geq X)=1$. A random variable $X\in L^0$ is said to be {\em positive} if $X\ge 0$ and {\em negative} if $X\leq0$. The {\em positive cone} and the {\em negative cone} are the closed, convex cones defined by
\[
L^0_+:=\{X\in L^0 \,; \ X\ge 0\} \ \ \ \mbox{and} \ \ \ L^0_-:=\{X\in L^0 \,; \ X\le 0\}\,.
\]
When equipped with the usual topology of convergence in probability, $L^0$ becomes a Frech\'{e}t lattice, i.e. a completely metrizable locally solid Riesz space. For $\cA\subset L^0$ we define $\cA_+:=\cA\cap  L^0_+$ and $\cA_-:=\cA\cap L^0_-$. Moreover, for any $X\in L^0$ we set
\[
X^+:=\max\{X,0\} \ \ \ \mbox{and} \ \ \ X^-:=\max\{-X,0\}\,.
\]
In particular, $X=X^+-X^-$.

\smallskip

Our ambient space $\cX$ will be either $L^\infty:= L^\infty(\Omega,\cF,\probp)$ equipped with the weak-star topology $\sigma(L^\infty,L^1)$ or any linear subspace of $L^0$ that is a Frech\'{e}t lattice satisfying the {\em $\sigma$-Lebesgue property} (see \cite{AliprantisBurkinshaw2003})
\begin{equation}
\label{sigma lebesgue}
X_n\downarrow 0 \ \mbox{a.s.} \ \implies \ X_n\to 0 \ \mbox{in} \ \cX,
\end{equation}
and the following dense, continuous embeddings
\begin{equation}
\label{embeddings}
L^\infty(\Omega) \overset{d}{\longhookrightarrow} \cX\overset{d}{\longhookrightarrow} L^0(\Omega)\,,
\end{equation}
where the continuity of the first embedding is with respect to the sup-norm and not the weak-star topology. Finally, we also require the following {\em localization property}
\begin{equation}
\label{localization}
1_A\cX := \{1_AX \,; \ X\in\cX\} \subset \cX\,.
\end{equation}

\smallskip

\begin{remark}
\label{rem: examples underlying space}
\begin{enumerate}[(i)]
\item The $\sigma$-Lebesgue property implies that every sequence $(X_n)$ in $\cX$ that is monotone and converges a.s. to $X\in\cX$ satisfies $X_n\to X$ in $\cX$.

\item In the context of Banach lattices, the Lebesgue property is nothing but the order continuity of the norm. Hence, the admissible class of ambient spaces includes all Orlicz hearts and, hence, all $L^p$ spaces for $1\leq p<\infty$. However, it does not include $L^\infty$ when equipped with the norm topology. Examples outside the Banach lattice universe include all $L^p$-spaces for $0\le p<1$ (which are not locally convex).

\item Note that, when equipped with the weak-star topology, $L^\infty$ also satisfies the $\sigma$-Lebesgue property \eqref{sigma lebesgue} and the embedding condition \eqref{embeddings}. However, $L^\infty$ with the weak-star topology is not a Frech\'{e}t lattice, since it is not metrizable. Nonetheless, for convex sets, closedeness can still be characterized by sequencial closedness as a well-known result by Grothendieck (Exercise 1 on page 240 in \cite{Grothendieck1973}) states: a convex subset of $L^\infty$ is weak-star closed if and only if it is closed under the operation of taking a.s. limits of norm bounded subsequences.
\end{enumerate}
\end{remark}

\smallskip

We write $\closure_\cX(\cA)$ for the closure of a set $\cA$ in $\cX$. In particular, for a subset $\cA$ of $L^\infty$, $\closure_{L^\infty}(\cA)$ always denotes its weak-star closure.

\smallskip

We consider a one-period economy with dates $t=0$ and $t=T$. At time $t=0$, financial institutions issue liabilities and invest in assets. At time $t=T$ they receive the payoff of the assets and redeem their liabilities. Assets and liabilities are assumed to be denominated with respect to a fixed unit of account, e.g. a fixed currency, and to belong to $\cX$. If $A\in \cX$ and $L\in \cX$ are positive random variables representing the terminal payoff of the institution's assets and liabilities, respectively, we will refer to the random variable $X:=A-L\in \cX$ as the {\em capital position} of the financial institution. We will always assume the owners of the institution have {\em limited liability}, i.e. the institution will default at time $t=T$ whenever the payoff of the assets does not suffice to repay liabilities. A concern of regulators is the risk of financial institutions defaulting on their obligations and one of the key instruments they have to mitigate this risk is to require that financial institutions be {\em adequately capitalized}. Acceptance sets are used to formalize the process of testing for capital adequacy.

\smallskip

Recall that a non-empty, strict subset $\cA\subset \cX$ is called an \emph{acceptance set} or a {\em capital adequacy test} if it is {\em monotone}, i.e.
\[
X\in\cA, \ \ Y\geq X \ \implies \ Y\in\cA.
\]
Acceptance sets that are convex or coherent, i.e. convex cones, are of particular importance because they capture diversification.


\section{Surplus invariance}

Consider a financial institution with capital position $X\in \cX$. The positive random variable $X^+$ is called the {\em (owners') surplus} and the, also positive, random variable $X^-$ is called the {\em (owners') option to default}. In case of a financial institution with limited liability, the surplus represents what belongs to the owners after liabilities have been settled and the option to default the amount by which the institution defaults. More precisely, $X^-$ represents the difference between the contractual and the actual payment to liability holders. Since a capital adequacy test is designed to protect the interests of liability holders, it makes sense that an institution should not pass the test if its default profile is riskier than the default profile of an institution that has been deemed inadequately capitalized. Equivalently, if an institution has been deemed adequately capitalized, any institution with a less risky default profile should also pass the test. This leads to the notion of a ``surplus invariant'' acceptance set.

\begin{definition}
Let $\cA\subset \cX$ be an acceptance set. We say that $\cA$ is {\em surplus invariant} if acceptability does not depend on the surplus of a capital position, i.e.
\[
X\in\cA, \ \ Y^-\leq X^- \ \implies \ Y\in\cA\,.
\]
\end{definition}

\smallskip

We start by providing a list of useful alternative characterizations of surplus invariance, which will be used without explicit reference in the sequel.

\begin{proposition}
\label{proposition: alternative definitions}
Let $\cA\subset\cX$ be an acceptance set. The following statements are equivalent:
\begin{enumerate}[(a)]
\item $\cA$ is surplus invariant;
\item $X\in\cA$ and $Y^-=X^-$ imply $Y\in\cA$;
\item $X\in\cA$ implies $-X^-\in\cA$;
\item $X\in\cA$ and $A\in\cF$ imply $1_AX\in\cA$.
\end{enumerate}
\end{proposition}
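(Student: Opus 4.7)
The plan is to prove the four implications cyclically: (a)$\Rightarrow$(b)$\Rightarrow$(c)$\Rightarrow$(d)$\Rightarrow$(a). Each step uses only the definition of $X^\pm$, monotonicity of the acceptance set, and an elementary pointwise identity, so no serious obstacle is expected; the only thing to keep clean is the choice of auxiliary random variable at each step.

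First I would observe that (a)$\Rightarrow$(b) is immediate, since $Y^-=X^-$ entails $Y^-\le X^-$. For (b)$\Rightarrow$(c), given $X\in\cA$, I would set $Y:=-X^-$ and note that $Y\le 0$ implies $Y^-=-Y=X^-$, so (b) yields $-X^-\in\cA$. For (c)$\Rightarrow$(d), given $X\in\cA$ and $A\in\cF$, property (c) gives $-X^-\in\cA$, and the pointwise identity
\[
1_AX-(-X^-)=1_AX^+-1_AX^-+X^-=1_AX^+ + 1_{A^c}X^-\ge 0
\]
shows $1_AX\ge -X^-$, whence monotonicity yields $1_AX\in\cA$.

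Finally, for (d)$\Rightarrow$(a), I would first extract (c) from (d) by choosing the event $A:=\{X<0\}$, since on this event $1_AX=X=-X^-$ and on $A^c$ both sides vanish, giving $-X^-=1_AX\in\cA$ whenever $X\in\cA$. Then, if $Y\in\cX$ satisfies $Y^-\le X^-$, I would multiply by $-1$ to obtain $-Y^-\ge -X^-$, apply monotonicity (since $-X^-\in\cA$) to conclude $-Y^-\in\cA$, and then apply monotonicity once more to $Y\ge -Y^-$ to get $Y\in\cA$. This closes the cycle.
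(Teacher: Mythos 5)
Your proof is correct and follows essentially the same route as the paper: the same cycle (a)$\Rightarrow$(b)$\Rightarrow$(c)$\Rightarrow$(d)$\Rightarrow$(a), with the key inequality $1_AX\ge -X^-$ for (c)$\Rightarrow$(d) and the choice $A=\{X<0\}$ for (d)$\Rightarrow$(a). The only difference is cosmetic: you spell out the pointwise identity and use two monotonicity steps at the end where the paper uses one ($Y\ge 1_{\{X<0\}}X$ directly).
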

\begin{proof}
It is clear that {\em (a)} implies {\em (b)}, which in turn implies {\em (c)}. Now, assume {\em (c)} holds and take $X\in\cA$ and $A\in\cF$. Since $1_AX\geq -X^-$, we immediately conclude that {\em (d)} is satisfied. Finally, assume {\em (d)} holds and take $X\in\cA$. If $Y^-\leq X^-$, then $Y\geq 1_{\{X<0\}}X$ implying that $Y\in\cA$ and showing that $\cA$ is surplus invariant.
\end{proof}

\smallskip

\begin{remark}
The characterization under {\em (b)} establishes the equivalence of the present definition of surplus invariance with the one introduced in Koch-Medina, Moreno-Bromberg \& Munari \cite{KMM2015}.
\end{remark}

\smallskip

The next result shows that a surplus-invariant acceptance set is fully determined by its negative part. A set $\cD\subset\cX_-$ is said to be {\em solid (in $\cX_-$)} if it satisfies
\[
X\in\cD, \ \ Y\in\cX, \ \ X\leq Y\leq0 \ \implies \ Y\in\cD\,.
\]

\begin{proposition}
\label{proposition: first characterization surpinv}
An acceptance set $\cA\subset\cX$ is surplus invariant if and only if
\begin{equation}
\label{first charac}
\cA =\cD+\cX_+
\end{equation}
for some solid set $\cD\subset\cX_-$. In this case, $\cA_-=\cD$. The set $\cA$ is convex if and only if $\cD$ is convex. Moreover, $\cA$ is closed if and only if $\cD$ is closed.
\end{proposition}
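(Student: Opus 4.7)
First, for the forward direction, I define $\cD := \cA_- = \cA \cap \cX_-$. Solidity of $\cD$ in $\cX_-$ is a direct consequence of the monotonicity of $\cA$: if $D \leq Y \leq 0$ with $D \in \cD$, then $Y \in \cA$ by monotonicity, and $Y \leq 0$ puts $Y$ in $\cD$. The inclusion $\cD + \cX_+ \subseteq \cA$ is again monotonicity. For the reverse inclusion, I take $X \in \cA$, apply Proposition~\ref{proposition: alternative definitions}(d) with $A = \{X < 0\}$ to get $-X^- = 1_A X \in \cA \cap \cX_- = \cD$, and invoke the localization property~\eqref{localization} to ensure $X^+ = 1_{\{X \geq 0\}} X \in \cX_+$; then $X = -X^- + X^+ \in \cD + \cX_+$. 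The identity $\cA_- = \cD$ holds by construction.

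For the reverse direction, I start with $\cA = \cD + \cX_+$ with $\cD \subset \cX_-$ solid and verify that $\cA$ is monotone (immediate from $Y \geq X = D + Z$ giving $Y = D + (Y - D)$ with $Y - D \in \cX_+$) and surplus invariant. For the latter, I check Proposition~\ref{proposition: alternative definitions}(c): writing $X \in \cA$ as $D + Z$, the inequality $X \geq D$ yields $X^- \leq D^- = -D$, so $D \leq -X^- \leq 0$, and solidity of $\cD$ forces $-X^- \in \cD \subseteq \cA$. Convexity preservation is then immediate from the identities $\cD = \cA \cap \cX_-$ (for the ``$\cA$ convex $\Rightarrow \cD$ convex'' direction) and $\cA = \cD + \cX_+$ (for the other).

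Finally, for closedness, the ``only if'' direction relies on $\cX_-$ being closed in both admissible topologies---as a closed solid cone in the Frech\'{e}t-lattice case, and as the intersection $\bigcap_{Y \in L^1_+} \{X : \int X Y \, d\probp \leq 0\}$ of weak-star closed half-spaces in the $L^\infty$ case---so $\cD = \cA \cap \cX_-$ inherits closedness. For the ``if'' direction, I take a net $X_\alpha \in \cA$ with $X_\alpha \to X$ in $\cX$ and recall that $-X_\alpha^- \in \cD$ by the forward part. In the Frech\'{e}t-lattice case, continuity of the lattice operations gives $-X_\alpha^- \to -X^-$, and closedness of $\cD$ yields $X \in \cA$. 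The hard part is the weak-star case, where $X \mapsto X^-$ fails to be continuous: there I exploit compactness, passing by Banach--Steinhaus and Banach--Alaoglu to a subnet with $X_\alpha^- \to W$ weak-star for some $W \in L^\infty$ with $W \geq 0$. Closedness of $\cD$ gives $-W \in \cD$, and taking limits in $X_\alpha^+ = X_\alpha + X_\alpha^-$ yields $X + W \geq 0$, hence $W \geq X^-$; solidity of $\cD$ applied to $-W \leq -X^- \leq 0$ then forces $-X^- \in \cD$, and therefore $X \in \cA$.
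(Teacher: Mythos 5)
Your proof of the equivalence, of the identity $\cA_-=\cD$, and of the convexity claim coincides with the paper's argument (set $\cD=\cA_-$, get solidity from monotonicity, and use $X=-X^-+X^+$ together with Proposition~\ref{proposition: alternative definitions}); those parts are correct. The problem is in the closedness direction ``$\cD$ closed $\Rightarrow\cA$ closed'' in the weak-star case. There you take an arbitrary net $X_\alpha\to X$ in $\sigma(L^\infty,L^1)$ and invoke ``Banach--Steinhaus and Banach--Alaoglu'' to extract a weak-star convergent subnet of $(X_\alpha^-)$. The uniform boundedness principle yields norm-boundedness only for weak-star convergent \emph{sequences}; a weak-star convergent net in $L^\infty$ need not be norm bounded (every weak-star neighbourhood of $X$ is norm-unbounded in infinite dimensions, so one can build unbounded nets converging to $X$), hence there is no bounded set on which Alaoglu applies and your limit $W$ need not exist. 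Since the proposition does not assume convexity, you also cannot reduce to sequences via metrizability or via the Grothendieck criterion of Remark~\ref{rem: examples underlying space}(iii). Everything after the extraction (positivity of $W$, $X+W\ge 0$, hence $W\ge X^-$, then solidity) is fine, so the gap is precisely the missing boundedness of the net.

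To be fair, the paper's own one-line justification (``continuity of the lattice operations'') only covers the Fr\'{e}ch\'{e}t-lattice branch, because $X\mapsto X^-$ is not $\sigma(L^\infty,L^1)$-continuous; your instinct that the weak-star case needs a separate argument is right, but the repair has to be different. When $\cD$ (equivalently $\cA$) is convex --- the only situation the paper later uses, e.g.\ in Lemma~\ref{lem:density} and Theorem~\ref{theorem: dual repr convex xs acc} --- the correct route is Grothendieck's criterion from Remark~\ref{rem: examples underlying space}(iii): test with a norm-bounded sequence $X_n\in\cA$ converging a.s.\ to $X\in L^\infty$; then $-X_n^-\to -X^-$ a.s.\ and boundedly, hence weak-star by dominated convergence against every element of $L^1$, so $-X^-\in\cD$ and $X=-X^-+X^+\in\cD+\cX_+$. (Your compactness argument also works verbatim for such bounded sequences, where $W$ does exist.) For non-convex $\cD$ in the weak-star setting, neither your net argument nor the paper's justification establishes the claim as written; if you want that case you must either restrict the closedness assertion to convex $\cD$ (or to the Fr\'{e}ch\'{e}t-lattice ambient spaces) or supply a genuinely different argument.
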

\begin{proof}
It is clear that $\cA$ is surplus invariant if it is given as in \eqref{first charac} for some solid set $\cD\subset\cX_-$. On the other hand, if $\cA$ is surplus invariant, then $\cA_-$ is solid by monotonicity. Since $\cA_-+\cX_+\subset\cA$ clearly holds, we only need to show the converse inclusion. Take $X\in\cA$ and note that $-X^-\in\cA$. Since $X=X^+-X^-$, the claim follows. The equivalence between the convexity of $\cA$ and that of $\cD$ is obvious. Finally, it is clear that if $\cA$ is closed so is $\cD$. Conversely, by using the continuity of the lattice operations, one can easily prove that $\cA$ is closed whenever $\cD$ is closed.
\end{proof}

\smallskip

The property of surplus invariance is shared by a variety of important acceptance sets.

\begin{example}[Value-at-Risk]
The {\em Value-at-Risk} of $X\in\cX$ at level $\alpha\in(0,1)$ is defined by
\[
\VaR_\alpha(X) := \inf\{t\in\R \,; \ \probp(X+t<0)\leq\alpha\}\,.
\]
The associated acceptance set
\[
\cA = \{X\in\cX \,; \ \VaR_\alpha(X)\leq0\}
\]
is easily seen to be surplus invariant since
\[
\cA = \{X\in\cX \,; \ \probp(X<0)\leq\alpha\} = \{X\in\cX \,; \ \probp(X^->0)\leq\alpha\}\,.
\]
In particular, acceptability boils down to checking whether the probability of default of $X$ does not exceed the threshold $\alpha$.
\end{example}

\smallskip

\begin{example}[Shortfall Risk]
Consider a nonconstant, increasing, convex function $\ell:\R_+\to\R_+$ satisfying $\ell(0)=0$ and take $c>0$. The acceptance set
\[
\cA = \{X\in\cX \,; \ \E[\ell(X^-)]\leq c\}
\]
is surplus-invariant. For more details we refer to F\"{o}llmer \& Schied \cite{FoellmerSchied2002}, where $\cA$ was called the acceptance set based on {\em shortfall risk} and $\ell$ was referred to as a {\em loss function}.
\end{example}

\smallskip

\begin{example}[Test Scenarios]
Consider a set $E\in\cF$ satisfying $\probp(E)>0$. The acceptance set
\[
\cA = \{X\in\cX \,; \ 1_EX\geq0\}
\]
is immediately seen to be surplus-invariant. In this case, acceptability is equivalent to requiring no default on the event $E$. In particular, if $\probp(E)=1$, then $\cA=\cX_+$ so that acceptability is equivalent to requiring no default in almost every state of the world.
\end{example}

\smallskip

\begin{example}[Expected Tail Loss]
The {\em Expected Shortfall} of $X\in\cX$ at level $\alpha\in(0,1)$ is defined by
\[
\ES_\alpha(X) := \frac{1}{\alpha}\int_0^\alpha\VaR_\beta(X)d\beta\,.
\]
The {\em Expected Tail Loss} of $X$ at level $\alpha$ is defined as $\ES_\alpha(-X^-)$ and was studied in Cont, Deguest \& He \cite{ContDeguestHe2013} in the context of loss-based risk measures. The acceptance set based on Expected Tail Loss at level $c\geq0$ is therefore given by
\[
\cA = \{X\in\cX \,; \ \ES_\alpha(-X^-)\leq c\}\,.
\]
It is immediate to see that $\cA$ is surplus invariant. As already remarked in Koch-Medina, Moreno-Bromberg \& Munari \cite{KMM2015}, the choice $c=0$ leads to the trivial case $\cA=\cX_+$. Indeed, the sensitivity of Expected Shortfall implies that $\ES_\alpha(-X^-)>0$ as soon as $X$ is nonzero.
\end{example}

\smallskip

The most prominent example of an acceptance set that fails to be surplus invariant is the one based on Expected Shortfall.

\begin{example}[Expected Shortfall]
The acceptance set based on Expected Shortfall at level $\alpha\in(0,1)$
\[
\cA = \{X\in\cX \,; \ \ES_\alpha(X)\leq0\}
\]
is not surplus invariant. Indeed, it is not difficult to verify that
\[
\ES_\alpha(X) = \frac{1}{\alpha}\int_0^{\probp(X<0)}\VaR_\beta(-X^-)d\beta+\frac{1}{\alpha}\int_{\probp(X<0)}^\alpha\VaR_\beta(X^+)d\beta
\]
holds for every $X\in\cX$, showing that $\ES_\alpha(X)$ does depend, whenever $\probp(X<0)<\alpha$, on the surplus of $X$. For a thorough analysis of the implication of the lack of surplus invariance for acceptability based on Expected Shortfall we refer to Koch-Medina \& Munari~\cite{KM2016}.
\end{example}


\subsection{Duality and surplus invariance}

In this section we provide an external characterization of surplus-invariant acceptance sets in $\cX$, by exploiting duality in $\left(L^\infty,\sigma(L^\infty,L^1)\right)$. This ``dual'' representation constitutes the key ingredient in the proof of the main result on the structure of surplus-invariant acceptance sets given later on.

\smallskip

The starting point is the following result highlighting how to recover a surplus invariant acceptance set from its restriction to $L^\infty$ and from its closure in $L^0$. In the sequel, for any $\cA\subset\cX$ we will denote by $\cA^\infty$ the set of all essentially bounded elements of $\cA$, i.e. we set
\[
\cA^\infty := \cA\cap L^\infty\,.
\]

\begin{lemma}
\label{lem:density}
Let $\cA\subset \cX$ be a closed, surplus-invariant acceptance set. Then, the following statements hold:
\begin{enumerate}[(i)]
\item $\cA = \closure_\cX(\cA^\infty)$.
\item $\cA = \closure_{L^0}(\cA)\cap \cX$.
\item If, in addition, $\cA$ is convex, then $\cA^\infty$ is closed with respect to the weak-star topology in $L^\infty$.
\end{enumerate}
\end{lemma}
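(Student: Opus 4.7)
The plan is to prove (i), (ii), (iii) in that order, leaning throughout on a ``dominated convergence'' consequence of the $\sigma$-Lebesgue property: if $Z_k\to Z$ almost surely with $|Z_k|\le g$ almost surely for some $g\in\cX_+$, then $Z_k\to Z$ in $\cX$. This is obtained by letting $W_m:=\sup_{k\ge m}|Z_k-Z|$, noting that $W_m$ is measurable, $0\le W_m\le 2g\in\cX$, and $W_m\downarrow 0$ almost surely, so that the $\sigma$-Lebesgue property and local solidness of the topology on $\cX$ (or weak-star dominated convergence in the $L^\infty$ case) force $Z_k\to Z$ in $\cX$.

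For (i), the inclusion $\closure_\cX(\cA^\infty)\subset\cA$ is automatic from closedness, so only the converse matters. Given $X\in\cA$, I would take the double truncation $Y_n:=(X\wedge n)\vee(-n)\in L^\infty$. A direct computation gives $Y_n^-=X^-\wedge n\le X^-$, so surplus invariance places $Y_n$ in $\cA\cap L^\infty=\cA^\infty$. Moreover $|Y_n-X|=(|X|-n)^+\downarrow 0$ almost surely with $|X|\in\cX$, so by the $\sigma$-Lebesgue property $Y_n\to X$ in $\cX$, yielding $X\in\closure_\cX(\cA^\infty)$.

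For (ii), one direction is trivial. For the converse, take $X\in\closure_{L^0}(\cA)\cap\cX$, pick $X_k\in\cA$ with $X_k\to X$ in probability, and pass to an almost surely convergent subsequence. Fix $n\in\N$ and set $U_k:=-(X_k^-\wedge n)\in L^\infty$; surplus invariance gives $U_k\in\cA$ since $U_k^-=X_k^-\wedge n\le X_k^-$. As $k\to\infty$, $U_k\to -(X^-\wedge n)$ almost surely with $|U_k|\le n\in\cX$, so the dominated convergence tool above produces convergence in $\cX$, and closedness of $\cA$ forces $-(X^-\wedge n)\in\cA$. Letting $n\to\infty$, the monotone convergence $-(X^-\wedge n)\to -X^-$ in $\cX$ (again via the $\sigma$-Lebesgue property with dominating element $X^-\in\cX$) gives $-X^-\in\cA$, and monotonicity applied to $X\ge -X^-$ concludes $X\in\cA$.

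For (iii), the case $\cX=L^\infty$ endowed with the weak-star topology is trivial since $\cA^\infty=\cA$. In the Fréchet-lattice case, $\cA^\infty$ is convex, and by the criterion of Grothendieck recalled in Remark~2.1(iii) it suffices to show that $\cA^\infty$ is closed under almost sure limits of norm-bounded sequences. Given $X_k\in\cA^\infty$ with $\|X_k\|_\infty\le M$ and $X_k\to X$ almost surely, the dominated convergence tool (with dominating constant $M\in L^\infty\subset\cX$) yields $X_k\to X$ in $\cX$, and closedness of $\cA$ together with $X\in L^\infty$ gives $X\in\cA^\infty$. The main obstacle is really just ensuring that the envelope step $W_m\in\cX$ goes through in every admissible ambient space; once this is in hand, the three statements are routine applications of the same truncation and approximation scheme.
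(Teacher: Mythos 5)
Your proof is correct and follows the same basic strategy as the paper's (truncation into $L^\infty$, the $\sigma$-Lebesgue property, and Grothendieck's criterion for part (iii)), with somewhat different bookkeeping. For (ii) you bypass part (i) entirely: you truncate at a fixed level $n$, obtain $-(X_k^-\wedge n)\in\cA\cap L^\infty$ by surplus invariance, let $k\to\infty$ by dominated convergence, and only then let $n\to\infty$; the paper instead uses (i), reduces to negative parts, and takes the upper envelope $Z_n=\sup_{k\ge n}X_k$, which lies in $\cA^\infty_-$ by solidity and decreases to $X$, so that the $\sigma$-Lebesgue property applies directly to a monotone sequence. Likewise, in (iii) you establish convergence in $\cX$ of the norm-bounded a.s.-convergent sequence and use closedness of $\cA$ in $\cX$, while the paper builds a monotone envelope, gets convergence in $L^0$, and routes through part (ii). Both routes work.

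One point you flag but leave open should be settled explicitly: your general ``dominated convergence'' tool applies the $\sigma$-Lebesgue property to $W_m=\sup_{k\ge m}|Z_k-Z|$, and this requires $W_m\in\cX$, which is not automatic for an arbitrary dominator $g\in\cX_+$, since $\cX$ is only assumed to be a Riesz subspace of $L^0$ containing $L^\infty$, not a solid subspace (ideal), and the $\sigma$-Lebesgue property only concerns sequences that already lie in $\cX$. Fortunately, every actual use of the tool is safe: in (ii) and (iii) the dominator is a constant ($n$, respectively $M$), so $W_m\in L^\infty\subset\cX$, and in (i) no envelope is needed because $|Y_n-X|=(|X|-n)^+$ is itself a decreasing sequence in $\cX$ (a lattice combination of $X$ and constants). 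So the argument is sound, but you should either state the tool only for bounded dominators or verify $W_m\in\cX$ in each application. Note also that your parenthetical fallback to classical dominated convergence when $\cX=L^\infty$ with the weak-star topology is genuinely necessary: that topology is not locally solid (lattice operations are not weak-star continuous), so the ``local solidness'' half of your tool is available only in the Fr\'echet-lattice case, which is exactly how you use it.
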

\begin{proof}
{\em (i)} Clearly, $\closure_\cX(\cA^\infty)\subset \cA$. To prove the converse inclusion take $X\in\cA_-$ and note that $X_n:=\max\{X,-n\}$ belongs to $\cA^\infty$ for every $n\in\N$ and that $X_n\to X$ in $\cX$ by the $\sigma$-Lebesgue property. It follows that $X\in\closure_\cX(\cA^\infty_-)$. The statement now follows by Proposition \ref{proposition: first characterization surpinv}.

\smallskip

{\em (ii)} It is clear that $\cA \subset\closure_{L^0}(\cA)\cap \cX$. To prove the converse assume $X\in \closure_{L^0}(\cA)\cap \cX$. By surplus invariance we may assume without loss of generality that $X\in\cX_-$. Since, by {\em (i)}, we have $\closure_{L^0}(\cA)=\closure_{L^0}(\cA^\infty)$, we find $X_n\in\cA^\infty_-$ such that $X_n\to X$ in $L^0$. By passing to a subsequence, we may assume that $X_n\to X$ a.s. and set for every $n\in\N$
\[
Z_n := \sup_{k\ge n}\{X_k\}\,.
\]
Note that, for every $n$, we have $ X_n\le  Z_n \le 0$ so that $Z_n$ belongs to $L^\infty$ and, by solidity of $\cA^\infty_-$, also to $\cA^\infty_-$. Moreover, $Z_n\downarrow X$ a.s., so that, by the $\sigma$-Lebesgue property, $Z_n\to X$ in $\cX$. Since $\cA$ is closed in $\cX$ we conclude $X\in\cA$.

\smallskip

{\em (iii)} By Proposition~\ref{proposition: first characterization surpinv}, it suffices to prove that $\cA^\infty_-$ is weak-star closed. Take a norm-bounded sequence $(X_n)$ in $\cA^\infty_-$ converging a.s. to $X\in L^\infty$. By the same procedure as above we can construct a monotone decreasing sequence $Z_n\in\cA^\infty$ converging a.s. to $X$. By the $\sigma$-Lebesgue property we infer that $Z_n\to X$ in $L^0$. Hence, $X\in \closure_{L^0}(\cA)\cap \cX$, which by part {\em (ii)}, implies $X\in\cA$. It follows from point {\em (iii)} in Remark~\ref{rem: examples underlying space} that $\cA^\infty$ is weak-star closed.
\end{proof}

\smallskip

We now turn to describe the prototype of a closed, convex, surplus-invariant acceptance set in $\cX$. A map $\varphi:L^\infty_+\to\R\cup\{-\infty\}$ satisfying
\begin{enumerate}[{\rm (F1)}]
\item $\varphi(Z)\leq 0$ for all $Z\in L^\infty_+$,
\item $\varphi(Z)>-\infty$ for some $Z\in L^\infty_+$,
\item $\varphi$ is decreasing,
\end{enumerate}
will be called a {\em decreasing floor function}.

\begin{proposition}
\label{prop: dual rep}
Let $\varphi:L^\infty\to\R\cup\{-\infty\}$ be a decreasing floor function. Then, the set
\begin{equation}
\label{dual repr convex xs acc}
\cA_\cX = \{X\in \cX \,; \ -\E[X^-Z]\geq\varphi(Z), \ \forall Z\in L^\infty_+\}
\end{equation}
is a closed, convex, surplus-invariant acceptance set in $\cX$. Moreover,
\begin{equation}
\label{closures representation}
\cA_\cX =\closure_\cX(\cA_{L^\infty})\,.
\end{equation}
\end{proposition}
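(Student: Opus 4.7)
The plan is to verify directly from the defining inequalities that $\cA_\cX$ is closed, convex, monotone, nonempty (a proper subset) and surplus invariant, and then to deduce \eqref{closures representation} as a one-line consequence of Lemma~\ref{lem:density}~(i).

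Monotonicity, convexity, and surplus invariance all fall out of elementary properties of the functional $X\mapsto-\E[X^-Z]$: it is nondecreasing in $X$ (since $X\leq Y$ implies $X^-\geq Y^-$), depends on $X$ only through $X^-$, and is concave thanks to the pointwise inequality $(\lambda X_1+(1-\lambda)X_2)^-\leq\lambda X_1^-+(1-\lambda)X_2^-$. Nonemptiness is immediate since $0\in\cA_\cX$ by (F1); that $\cA_\cX$ is a proper subset of $\cX$ is checked by evaluating the defining inequality at a constant $X=-c$ against some $Z_0$ with $\varphi(Z_0)>-\infty$ and $\E[Z_0]>0$, which exists in all non-degenerate cases thanks to (F2) and (F3).

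The main step is closedness in $\cX$, and it is the delicate one because $\cX$ may fail to be locally convex and need not interact nicely with $L^\infty$-integration. The idea is to extract an a.s.\ convergent subsequence and invoke Fatou's lemma. Given $X_n\in\cA_\cX$ with $X_n\to X$ in $\cX$, the continuous embedding $\cX\hookrightarrow L^0$ from~\eqref{embeddings} gives convergence in probability, and after extracting a subsequence I may assume $X_n^-\to X^-$ almost surely. For any fixed $Z\in L^\infty_+$ the integrands $X_n^-Z$ are nonnegative and converge a.s.\ to $X^-Z$, so Fatou's lemma yields
\[
\E[X^-Z]\leq\liminf_n\E[X_n^-Z].
\]
Multiplying by $-1$ this becomes $-\E[X^-Z]\geq\limsup_n(-\E[X_n^-Z])\geq\varphi(Z)$, whence $X\in\cA_\cX$. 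This Fatou step is what replaces the dominated-convergence or uniform-integrability arguments one would use in a Banach setting.

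For the representation~\eqref{closures representation}, the inclusion $\closure_\cX(\cA_{L^\infty})\subset\cA_\cX$ follows from the closedness of $\cA_\cX$ together with $\cA_{L^\infty}\subset\cA_\cX$. The converse inclusion is Lemma~\ref{lem:density}~(i) applied to the (now verified) closed, surplus-invariant acceptance set $\cA_\cX$, upon observing that $(\cA_\cX)^\infty=\cA_\cX\cap L^\infty=\cA_{L^\infty}$.
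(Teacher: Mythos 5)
Your treatment of convexity, surplus invariance, the Fatou argument for closedness (extract an a.s.\ convergent subsequence after passing to $L^0$) and the deduction of \eqref{closures representation} from Lemma~\ref{lem:density}(i) is exactly the paper's route for the case where $\cX$ is one of the Fr\'{e}chet lattice ambient spaces, and there it is correct; the explicit observation that $(\cA_\cX)^\infty=\cA_{L^\infty}$ is a useful detail the paper leaves implicit. (Your hedge about properness in ``non-degenerate cases'' is at the same level of informality as the paper, which does not address properness at all, so I do not count it as the issue.)

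The gap is the case $\cX=L^\infty$ equipped with the weak-star topology $\sigma(L^\infty,L^1)$, which the statement covers and which is precisely the case needed later, since $\closure_{L^\infty}$ means weak-star closure. Your closedness step fails there for two reasons. First, the weak-star topology is not metrizable, so checking that limits of convergent \emph{sequences} of elements of $\cA_{L^\infty}$ stay in $\cA_{L^\infty}$ does not by itself establish closedness. Second, and more importantly, ``convergence in $\cX$ gives convergence in probability'' is false for this topology: the continuity of the embedding $L^\infty\hookrightarrow L^0$ in \eqref{embeddings} is stated with respect to the sup-norm, not the weak-star topology (a Rademacher-type sequence converges to $0$ weak-star but not in probability), so you cannot pass to an a.s.\ convergent subsequence of a weak-star convergent sequence. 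The paper disposes of this case separately (declaring weak-star closedness ``clear''). A clean repair of your argument: since $\cA_{L^\infty}$ is convex, by Grothendieck's criterion recalled in Remark~\ref{rem: examples underlying space}(iii) it suffices to show that if a norm-bounded sequence $(X_n)$ in $\cA_{L^\infty}$ converges a.s.\ to $X\in L^\infty$, then $X\in\cA_{L^\infty}$; this follows from your Fatou (indeed dominated convergence) computation applied to $X_n^-Z$ for each fixed $Z\in L^\infty_+$. With this amendment your proof is complete and otherwise coincides with the paper's.
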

\begin{proof}
Clearly $\cA_\cX$ is convex and surplus invariant. It is also clear that $\cA_\cX$ is weak-star closed if $\cX=L^\infty$. If $\cX \neq L^\infty$, we consider a sequence $(X_n)$ in $\cA_\cX$ converging to $X$ in $\cX$ and, therefore, also in $L^0$. If necessary passing to a suitable subsequence of $(X_n)$, Fatou's Lemma now implies that for every $Z\in L^\infty_+$ we have
\[
-\E[X^-Z] \geq -\liminf_{n\to\infty}\E[X_n^-Z] \geq \varphi(Z)\,.
\]
Since $Z$ was arbitrary, we conclude that $X\in\cA_\cX$. The statement in \eqref{closures representation} is a direct consequence of Lemma~\ref{lem:density}.
\end{proof}

\smallskip

We now turn to a dual representation for surplus-invariant acceptance sets in $\cX$ by means of decreasing floor functions. Recall that the \emph{(lower) support function} of a nonempty subset $\cC\subset L^\infty$ is the superlinear and upper-semicontinuous map $\sigma_\cC:L^1\rightarrow\R\cup\{-\infty\}$ defined by
\begin{equation}
\sigma_\cC(Z) := \inf_{X\in\cC}\E[XZ]\,.
\end{equation}
The effective domain of $\sigma_\cC$, called the \emph{barrier cone} of $\cC$, is the convex cone
\[
B(\cC) := \{Z\in L^1 \,; \ \sigma_\cC(Z)>-\infty\}\,.
\]
If $\cC$ is a cone, then $Z\in B(\cC)$ if and only if $\sigma_\cC(Z)=0$. By Lemma~3.11 in Farkas, Koch-Medina \& Munari \cite{FarkasKochMunari2014}, we have $B(\cA)\subset L^1_+$ whenever $\cC$ is a monotone set.

\begin{theorem}
\label{theorem: dual repr convex xs acc}
Let $\cA\subset\cX$ be a closed, convex, surplus-invariant acceptance set. Then, there exists a decreasing floor function $\varphi:L^\infty_+\to\R\cup\{-\infty\}$ such that
\begin{equation}
\label{general dual repr convex xs acc}
\cA = \bigcap_{Z\in L^\infty_+}\{X\in\cX \,; \ -\E[X^-Z]\geq\varphi(Z)\}\,.
\end{equation}
For $\varphi$ we can always choose the restriction of $\sigma_{\cA^\infty}$ to $L^\infty_+$. Moreover, we have $\sigma_{\cA^\infty}\geq\varphi$ on $L^\infty_+$ for every decreasing floor function satisfying~\eqref{general dual repr convex xs acc}.
\end{theorem}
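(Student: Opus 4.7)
The plan is to take $\varphi$ as the restriction of the lower support function $\sigma_{\cA^\infty}$ to $L^\infty_+$, so that the representation emerges from bipolar duality applied to the weak-star closed convex set $\cA^\infty\subset L^\infty$ guaranteed by Lemma~\ref{lem:density}(iii). I would first verify the floor function axioms. Any $X\in\cA$ yields $0\geq -X^-\in\cA$ via surplus invariance and monotonicity, so $0\in\cA^\infty$, which immediately gives (F1) and (F2) (the latter via $\varphi(0)=0$). For (F3), I would use the identity
\[
\sigma_{\cA^\infty}(Z) = \inf_{Y\in\cA^\infty_-}\E[YZ],\qquad Z\in L^\infty_+,
\]
which holds because for $X\in\cA^\infty$ one has $-X^-\in\cA^\infty_-$ with $\E[-X^-Z]\leq \E[XZ]$ whenever $Z\geq 0$; since the competing $Y$'s are negative, the resulting infimum decreases as $Z$ increases.

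For the inclusion $\cA\subset\cA'$, where $\cA'$ denotes the right-hand side of the asserted representation, any $X\in\cA^\infty$ satisfies $-X^-\in\cA^\infty$ by surplus invariance, hence $-\E[X^-Z]=\E[-X^-Z]\geq \sigma_{\cA^\infty}(Z)=\varphi(Z)$ for every $Z\in L^\infty_+$. For a general $X\in\cA$, Lemma~\ref{lem:density}(i) provides a sequence $X_n\in\cA^\infty$ with $X_n\to X$ in $\cX$; passing to an almost surely convergent subsequence and applying Fatou's lemma to the nonnegative variables $X_n^-Z$ transfers the inequality to $X$.

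The main obstacle is the reverse inclusion $\cA'\subset\cA$: the membership condition for $\cA'$ only involves $Z\in L^\infty_+$, whereas the bipolar characterization of $\cA^\infty$ in $(L^\infty,L^1)$ needs all $Z\in L^1$. I would handle this by a two-stage truncation. For $X\in\cA'$, set $Y_n:=-\min\{X^-,n\}\in L^\infty_-$ and observe that
\[
\E[Y_n Z]\geq -\E[X^-Z]\geq \varphi(Z)=\sigma_{\cA^\infty}(Z)
\]
for every $Z\in L^\infty_+$. To extend this to arbitrary $Z\in L^1$, I would invoke $B(\cA^\infty)\subset L^1_+$ to discard dual vectors outside $L^1_+$, and for $Z\in L^1_+$ approximate by $Z\wedge m\in L^\infty_+$, letting $m\to\infty$ via dominated convergence on the left and the monotonicity $\sigma_{\cA^\infty}(Z\wedge m)\geq \sigma_{\cA^\infty}(Z)$ on the right. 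The bipolar theorem then yields $Y_n\in\cA^\infty$. Since $Y_n+X^-=\max\{X^--n,0\}\downarrow 0$ lies in $\cX$, the $\sigma$-Lebesgue property gives $Y_n\to -X^-$ in $\cX$; closedness of $\cA$ then yields $-X^-\in\cA$, and surplus invariance delivers $X\in\cA$. The maximality claim is immediate: for any floor function $\varphi'$ satisfying \eqref{general dual repr convex xs acc}, every $X\in\cA^\infty$ and $Z\in L^\infty_+$ satisfy $\E[XZ]\geq -\E[X^-Z]\geq \varphi'(Z)$, and taking the infimum over $X\in\cA^\infty$ gives $\sigma_{\cA^\infty}(Z)\geq\varphi'(Z)$.
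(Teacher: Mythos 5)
Your proof is correct, but it follows a genuinely different route from the paper's in the key step. The paper outsources the hard part to Theorem~4.1 of \cite{KMM2015}, which directly gives the representation of the weak-star closed set $\cA^\infty$ by halfspaces of the form $\{-\E[X^-Z]\geq\sigma_{\cA^\infty}(Z)\}$ over $Z\in L^1_+$; it then truncates the \emph{dual} variable ($Z_n=\min\{Z,n\}$, monotone convergence, monotonicity of $\sigma_{\cA^\infty}$) to pass to $Z\in L^\infty_+$, and finally lifts from $\cA^\infty$ to $\cA$ via Lemma~\ref{lem:density} together with Proposition~\ref{prop: dual rep}. You instead make the argument self-contained: for the hard inclusion you truncate the \emph{primal} default option, $Y_n=-\min\{X^-,n\}$, show $\E[Y_nZ]\geq\sigma_{\cA^\infty}(Z)$ for all $Z\in L^1$ (extending from $L^\infty_+$ by your own dual truncation plus the fact that $\sigma_{\cA^\infty}=-\infty$ off $L^1_+$ by monotonicity), and then invoke plain Hahn--Banach/bipolar duality for the weak-star closed convex set $\cA^\infty$ to get $Y_n\in\cA^\infty$, finishing with the $\sigma$-Lebesgue property, closedness of $\cA$ and surplus invariance. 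In effect you re-prove the relevant direction of the cited $L^\infty$ result along the way; what this buys is independence from \cite{KMM2015} at the cost of a somewhat longer argument, while the paper's proof is shorter but relies on the external theorem and on Proposition~\ref{prop: dual rep} for the closure step. Two cosmetic points: the monotonicity $\sigma_{\cA^\infty}(Z\wedge m)\geq\sigma_{\cA^\infty}(Z)$ is used for $Z\in L^1_+$, so state your identity $\sigma_{\cA^\infty}(Z)=\inf_{Y\in\cA^\infty_-}\E[YZ]$ for all positive $Z\in L^1$ (same one-line proof); and in the forward inclusion the passage to an a.s.\ convergent subsequence only makes sense in the Fr\'{e}chet-lattice case, but that is harmless since for $\cX=L^\infty$ with the weak-star topology one has $\cA=\cA^\infty$ and the bounded case already covers it.
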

\begin{proof}
By Lemma~\ref{lem:density}, $\cA^\infty$ is a weak-star closed, convex, surplus-invariant acceptance set in $L^\infty$. Moreover, by Theorem~4.1 in Koch-Medina, Moreno-Bromberg \& Munari \cite{KMM2015}, we have
\begin{equation}
\label{dual repr Linfty}
\cA^\infty = \bigcap_{Z\in L^1_+}\{X\in L^\infty \,; \ -\E[X^-Z]\geq\sigma_{\cA^\infty}(Z)\}\,.
\end{equation}
To see that it suffices to take the intersection over $Z\in L^\infty_1$ instead of $Z\in L^1_+$, assume that $X\in L^\infty$ satisfies
\[
-\E[X^-Z]\geq\sigma_{\cA^\infty}(Z)
\]
for every $Z\in L^\infty$. Then, if $Z\in L^1_+$ we may consider $Z_n:=\min\{Z,n\}\in L^\infty_+$ for every $n\in\N$. We then have
\[
-\E[X^-Z_n] \geq \sigma_{\cA^\infty}(Z_n) \geq \sigma_{\cA^\infty}(Z)\,,
\]
where we have used that $\sigma_{\cA^\infty}$ is decreasing. Noting that the sequence $(X^-Z_n)$ is monotone increasing, we can now let $n$ tend to $\infty$ and obtain
\[
-\E[X^-Z] = -\lim_{n\to\infty}\E[X^-Z_n] \geq \sigma_{\cA^\infty}(Z).
\]
It follows that
\[
\cA^\infty = \bigcap_{Z\in L^\infty_+}\{X\in L^\infty \,; \ -\E[X^-Z]\geq\sigma_{\cA^\infty}(Z)\}\,.
\]

\smallskip

Applying Lemma~\ref{lem:density} and Proposition~\ref{prop: dual rep} we now obtain
\[
\cA=\closure_\cX(\cA^\infty)=\bigcap_{Z\in L^\infty_+}\{X\in \cX \,; \ -\E[X^-Z]\geq\sigma_{\cA^\infty}(Z)\}\,.
\]
Since the maximality of $\sigma_{\cA^\infty}$ in \eqref{general dual repr convex xs acc} is clear, this concludes the proof of the theorem.
\end{proof}

\smallskip

\begin{remark}
\begin{enumerate}[(i)]
\item Note that the representation \eqref{general dual repr convex xs acc} involves halfspaces generated by ``functionals'' in $L^\infty$. In particular, if $\cX$ is continuously embedded into $L^1$, this implies that closed, convex, surplus-invariant acceptance sets are automatically $\sigma(\cX,L^\infty)$-closed.
\item The preceding result holds for any space $\cX=L^p$, including the case $p<1$. In spite of the structural lack of local convexity of these spaces, surplus invariance allows to provide an external characterization by using ``duality'' theory in $L^\infty$ equipped with the weak-star topology. In this sense, the representation \eqref{general dual repr convex xs acc} should be compared with the bipolar representation on $L^0$ obtained in Brannath \& Schachermayer \cite{brannath1999} and generalized in Kupper \& Svindland \cite{kupper2011}.
\end{enumerate}
\end{remark}


\subsection{The structure of surplus invariance}

In this section we prove the main results of the paper. For any $A\in\cF$ and $\cA\subset\cX$ we define the subset of $\cX$
\[
1_A\cA := \{1_AX \,; \ X\in\cA\}\,.
\]
In particular, we set
\[
\cX(A) := 1_A\cX\,.
\]
The corresponding positive and negative cones are denoted by $\cX_+(A)$ and $\cX_-(A)$, respectively. More precisely, we set
\[
\cX_+(A):=\cX(A)\cap\cX_+ \ \ \ \mbox{and} \ \ \ \cX_-(A):=\cX(A)\cap\cX_-\,.
\]

\smallskip

Recall that the {\em recession cone} of a closed, convex set $\cA\subset\cX$ containing zero is defined by
\[
\rec(\cA) := \bigcap_{\alpha>0}\alpha\cA
\]
and is the largest closed, convex cone contained in $\cA$. We omit the straightforward proof of the following lemma.

\begin{lemma}
\label{recession cone lemma}
Let $\cA\subset\cX$ be a closed, convex, surplus-invariant acceptance set. Then $\rec(\cA)$ is a closed, coherent, surplus-invariant acceptance set.
\end{lemma}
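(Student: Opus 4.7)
The plan is to verify in turn that $\rec(\cA)$ (i) contains $0$ so the definition is sensible and meaningful, (ii) is a closed convex cone, (iii) is monotone with $\rec(\cA)\ne\cX$, and (iv) is surplus invariant. The conceptual driver is that monotonicity and surplus invariance are both stable under positive scaling and under intersection, while closedness and the cone property are standard consequences of $\cA$ being closed, convex, and containing the origin.

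First I would observe that $0\in\cA$: picking any $X\in\cA$, Proposition~\ref{proposition: alternative definitions}\,(c) gives $-X^-\in\cA$, and then monotonicity applied to $0\geq -X^-$ yields $0\in\cA$. Hence $\rec(\cA)$ contains $0$ and the formula $\rec(\cA)=\bigcap_{\alpha>0}\alpha\cA$ really produces the largest closed convex cone inside $\cA$. Closedness of $\rec(\cA)$ follows because each $\alpha\cA$ is closed (multiplication by $\alpha>0$ is a homeomorphism of $\cX$) and arbitrary intersections of closed sets are closed; convexity is inherited from each $\alpha\cA$; and stability under positive scalar multiplication comes from rewriting $\lambda X\in\alpha\cA$ as $X\in(\alpha/\lambda)\cA$, noting that $\alpha/\lambda$ still ranges over $(0,\infty)$.

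Next I would transfer monotonicity and surplus invariance from $\cA$ to each $\alpha\cA$, and then to the intersection. For monotonicity, if $X\in\alpha\cA$ and $Y\geq X$ in $\cX$, then $Y/\alpha\geq X/\alpha\in\cA$, so monotonicity of $\cA$ gives $Y/\alpha\in\cA$, i.e.\ $Y\in\alpha\cA$. For surplus invariance, if $X\in\alpha\cA$ and $Y\in\cX$ satisfies $Y^-\leq X^-$, then $(Y/\alpha)^-\leq(X/\alpha)^-$ with $X/\alpha\in\cA$, and surplus invariance of $\cA$ gives $Y/\alpha\in\cA$, i.e.\ $Y\in\alpha\cA$. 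Both properties are preserved under intersections, so $\rec(\cA)$ is monotone and surplus invariant. Finally, since $\rec(\cA)\subset\cA\subsetneq\cX$ and $0\in\rec(\cA)$, it is a nonempty strict subset of $\cX$, completing the verification that it is a closed, coherent (closed convex cone and monotone), surplus-invariant acceptance set.

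There is really no serious obstacle here; the only thing one must be a little careful about is that both ``monotone'' and ``surplus invariant'' are formulated as implications quantified over $Y\in\cX$, and one must confirm that the scaling argument keeps $Y$ inside $\cX$ (which it does trivially, since $\cX$ is a linear space). The rest is bookkeeping on the definition $\rec(\cA)=\bigcap_{\alpha>0}\alpha\cA$.
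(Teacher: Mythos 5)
Your verification is correct and is precisely the ``straightforward proof'' the paper chooses to omit: you check $0\in\cA$ (so the definition $\rec(\cA)=\bigcap_{\alpha>0}\alpha\cA$ applies), note that closedness, convexity, monotonicity and surplus invariance all pass from $\cA$ to each $\alpha\cA$ and survive intersection, and that positive scaling just reindexes the intersection, while $\rec(\cA)\subset\cA\subsetneq\cX$ gives nontriviality. No gaps; the only mildly nontrivial observations, that $0\in\cA$ via $-X^-\in\cA$ and monotonicity and that $(\lambda Y)^-=\lambda Y^-$ for $\lambda>0$, are exactly the ones you make.
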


\smallskip

The following result provides a characterization of closed, coherent, surplus-invariant acceptance set in terms of a ``decomposition'' of $\Omega$ into a measurable set $C$ of ``no control'' and a set $C^c$ of ``full control''. Its proof relies on the exhaustion technique used in the proof of Th\'{e}or\`{e}me~2 in Yan~\cite{yan1980}.

\begin{lemma}
\label{thm: coherent structures}
Let $\cA\subset\cX$ be a closed, convex, surplus-invariant acceptance set. Then, there exists a set $C\in\cF$ such that we have
\begin{enumerate}[(i)]
\item $Z=0$ almost surely on $C$ for every $Z\in B(\cA^\infty)$;
\item $Z^\ast>0$ almost surely on $C^c$ for some $Z^*\in B(\cA^\infty)\cap L^\infty$.
\end{enumerate}
Moreover,
\[
\cA=\cX(C)\oplus 1_{C^c}\cA\quad\mbox{ and }\quad\rec(\cA)=\cX(C)\oplus \cX_+(C^c)\,.
\]
\end{lemma}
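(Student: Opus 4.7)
The plan is to construct $C$ via a Yan-style exhaustion and then read off both decompositions from the dual representation of Theorem~\ref{theorem: dual repr convex xs acc}. First I would define
\[
\cG := \{G \in \cF \,;\ \exists\, Z \in B(\cA^\infty) \cap L^\infty_+ \mbox{ with } Z > 0 \mbox{ a.s.\ on } G\}
\]
and aim to show that $\cG$ is stable under countable unions. Given witnesses $Z_n \in B(\cA^\infty) \cap L^\infty_+$ for $G_n \in \cG$, I would normalize $\tilde Z_n := Z_n/\|Z_n\|_\infty$ and choose weights $c_n > 0$ with $\sum_n c_n < \infty$ and $M := \sum_n c_n|\sigma_{\cA^\infty}(\tilde Z_n)| < \infty$. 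Since $B(\cA^\infty)$ is a convex cone and $\sigma_{\cA^\infty}$ is superlinear, the partial sums $S_N := \sum_{n=1}^N c_n \tilde Z_n$ satisfy $\sigma_{\cA^\infty}(S_N) \geq -M$ uniformly in $N$. Splitting $X = X^+ - X^-$ and applying monotone convergence to each piece gives $\E[X S_N] \to \E[X Z]$ for $Z := \sum_n c_n \tilde Z_n \in L^\infty_+$, so $\sigma_{\cA^\infty}(Z) \geq -M$ and $Z \in B(\cA^\infty)$ witnesses $\bigcup_n G_n \in \cG$. A standard exhaustion then produces $G^\ast \in \cG$ of maximal probability with witness $Z^\ast$; setting $C := (G^\ast)^c$ gives item~(ii). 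For (i), surplus invariance forces $\sigma_{\cA^\infty}$ to be decreasing on $L^1_+$, because $-X^- \in \cA^\infty$ whenever $X \in \cA^\infty$ reduces the infimum to $\sigma_{\cA^\infty}(Z) = \inf_{X \in \cA^\infty_-} \E[X Z]$ on $L^1_+$, and each $X \leq 0$ is anti-monotone in $Z$. Hence any $Z \in B(\cA^\infty)$ admits the truncation $Z \wedge 1 \in B(\cA^\infty) \cap L^\infty_+$ with $\{Z \wedge 1 > 0\} = \{Z > 0\}$, and maximality of $G^\ast$ forces $\{Z > 0\} \subseteq C^c$ almost surely.

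For the decomposition $\cA = \cX(C) \oplus 1_{C^c}\cA$, Proposition~\ref{proposition: alternative definitions}(d) immediately gives $1_{C^c}\cA \subseteq \cA$, and every $X \in \cA$ trivially splits as $1_C X + 1_{C^c} X$. To see $\cX(C) \subseteq \cA$, take $W \in \cX(C)$: property~(i) forces $W^- Z = 0$ a.s.\ for every $Z \in B(\cA^\infty) \cap L^\infty_+$, while $0 \in \cA^\infty$ yields $\sigma_{\cA^\infty}(Z) \leq 0$, so the dual inequality $-\E[W^- Z] \geq \sigma_{\cA^\infty}(Z)$ holds and Theorem~\ref{theorem: dual repr convex xs acc} gives $W \in \cA$. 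The same calculation applied to $W + 1_{C^c} Y$ with $W \in \cX(C)$ and $Y \in \cA$ yields $\cX(C) + 1_{C^c}\cA \subseteq \cA$; the intersection $\cX(C) \cap 1_{C^c}\cA = \{0\}$ follows from disjoint supports.

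For the recession cone, $\cX(C) \oplus \cX_+(C^c) \subseteq \rec(\cA)$ is immediate since both summands are cones contained in $\cA$. Conversely, $X \in \rec(\cA)$ means $\alpha X \in \cA$ for every $\alpha > 0$, so Theorem~\ref{theorem: dual repr convex xs acc} yields
\[
\alpha\,\E[X^- Z] \leq -\sigma_{\cA^\infty}(Z) < \infty
\]
for every $Z \in B(\cA^\infty) \cap L^\infty_+$. Letting $\alpha \to \infty$ forces $\E[X^- Z] = 0$; taking $Z = Z^\ast$ and invoking $Z^\ast > 0$ on $C^c$ gives $X^- = 0$ on $C^c$, so $1_{C^c} X \in \cX_+(C^c)$ and the decomposition follows. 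The main obstacle is keeping the countably aggregated barrier witness inside $B(\cA^\infty)$ with a controlled $\sigma$-value in the first step; this is secured by the weight choice combined with the superlinearity of $\sigma_{\cA^\infty}$.
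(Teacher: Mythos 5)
Your proposal is correct and follows essentially the same route as the paper: a Yan-type exhaustion over the barrier cone $B(\cA^\infty)$ to produce $C$ and the witness $Z^\ast$, followed by the dual representation of Theorem~\ref{theorem: dual repr convex xs acc} to obtain both decompositions. Your minor variations (maximizing the probability of $\{Z>0\}$ instead of minimizing that of $\{Z=0\}$, using monotone convergence rather than $L^\infty$-norm convergence with upper semicontinuity of $\sigma_{\cA^\infty}$, and the scaling argument $\alpha\to\infty$ for $\rec(\cA)$ in place of $\sigma_{\rec(\cA)^\infty}=0$) are cosmetic and all steps check out.
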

\begin{proof}
We first prove that the class
\[
\cG := \{\{Z=0\} \,; \ Z\in B(\cA^\infty)\}
\]
is closed under countable intersections. Indeed, consider a sequence $(Z_n)$ in $B(\cA^\infty)\cap L^\infty$ and take a sequence $(\alpha_n)$ of positive real numbers such that
\[
\sum_{n\in\N}\alpha_n \lnorm Z_n\rnorm_\infty \ \ \ \mbox{and} \ \ \ \sum_{n\in\N} \alpha_n\sigma_{\cA^\infty}(Z_n)
\]
both converge. The first condition implies that $\sum_n\alpha_n Z_n$ converges to some $Z$ in the norm topology of $L^\infty$ and the second, by the upper semicontinuity of $\sigma_{\cA^\infty}$, that $Z\in B(\cA^\infty)$. Moreover,
\[
\{Z=0\} = \bigcap_{n\in\N}\{Z_n=0\}\,.
\]
It follows that $\cG$ is closed under countable intersections. Take now a sequence $(Z_n)$ in $B(\cA^\infty)$ such that
\[
\lim_{n\to\infty}\probp(Z_n=0) = \inf_{E\in\cG}\probp(E)\,.
\]
Then, we find a suitable $Z^\ast\in B(\cA^\infty)$ such that
\[
\{Z^\ast=0\} = \bigcap_{n\in\N}\{Z_n=0\}\,.
\]
In particular, $\{Z^\ast=0\}$ attains the minimal probability over $\cG$. We claim that
\[
C:=\{Z^\ast=0\}
\]
has the desired properties.

\smallskip

Clearly, $Z^\ast$ satisfies the condition in {\em (ii)}. To prove that {\em (i)} holds, assume there exists $Z\in B(\cA^\infty)$ with $Z>0$ on a measurable subset $E$ of $C$ with nonzero probability. Then, $Z+Z^\ast$ would be an element of $B(\cA^\infty)$ satisfying $\probp(Z+Z^\ast=0)\leq\probp(C\setminus E)<\probp(C)$, contradicting the minimality of $C$. Hence, {\em (i)} is satisfied.

\smallskip

Now, take $X\in\cX$. Then, for every $Z\in B(\cA^\infty)$ we have $\E[XZ]=\E[1_{C^c}XZ]$, which implies
\[
\E[XZ]\ge \sigma_{\cA}(Z) \ \iff \ \E[1_{C^c}XZ]\ge \sigma_{\cA}(Z)\,.
\]
By the dual representation in Theorem~\ref{theorem: dual repr convex xs acc}, this implies that $\cA=\cX(C)\oplus1_{C^c}\cA$. Moreover,  since $\rec(\cA)\subset\cA$ we have that $B(\cA^\infty)\subset B(\rec(\cA)^\infty)$. Take $X\in\rec(\cA)$, then by surplus invariance $-X^-\in\rec(\cA)$ and
\[
-\E[1_{C^c}X^-Z]\ge \sigma_{\rec(\cA)^\infty}(Z)=0
\]
where $\sigma_{\rec(\cA)^\infty}(Z)=0$ holds since $\rec(\cA)^\infty$ is a cone. But this is only possible if $-1_{C^c}X^- =0$. This implies $\rec(\cA)=\cX(C)\oplus \cX_+(C^c)$.
\end{proof}

\smallskip

\begin{theorem}
\label{main theorem: coherent case}
Let $\cA\subset\cX$ be a closed, coherent acceptance set. Then, $\cA$ is surplus invariant if and only if there exists $A\in\cF$ with $\probp(A)>0$ such that
\[
\cA = \cX_+(A)\oplus\cX(A^c) = \{X\in\cX \,; \ 1_AX\geq 0\}\,.
\]
\end{theorem}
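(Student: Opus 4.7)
My plan is to reduce this essentially to Lemma~\ref{thm: coherent structures} (on closed, convex, surplus-invariant acceptance sets) by exploiting that coherence collapses the recession cone onto $\cA$ itself.

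For the easy direction, given $A\in\cF$ with $\probp(A)>0$, I would define $\cA:=\{X\in\cX \,;\ 1_A X\ge 0\}$ and verify by inspection that it is a convex cone (coherence), is monotone and surplus invariant (the defining condition depends only on $X^-$ on $A$), and is closed in $\cX$ (using the continuous embedding $\cX\hookrightarrow L^0$ and passing to an a.s.\ convergent subsequence to push the inequality $1_A X_n\ge0$ to the limit). The alternative description $\cA = \cX_+(A)\oplus\cX(A^c)$ then follows from the elementary decomposition $X = 1_{A^c}X + 1_A X$, whose first summand lies in $\cX(A^c)$ and whose second summand lies in $\cX_+(A)$ precisely when $1_A X\ge 0$.

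For the nontrivial direction, the main observation is that, since $\cA$ is a closed convex cone containing $0$, it coincides with its recession cone, $\rec(\cA)=\cA$. Applying Lemma~\ref{thm: coherent structures} to $\cA$ (which is, in particular, closed, convex and surplus invariant) I obtain a set $C\in\cF$ with
\[
\cA=\rec(\cA)=\cX(C)\oplus\cX_+(C^c)\,.
\]
Setting $A:=C^c$, the right-hand side is exactly $\cX_+(A)\oplus\cX(A^c)$, which by the decomposition argument sketched above equals $\{X\in\cX \,;\ 1_A X\ge 0\}$.

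The only remaining point is to show $\probp(A)>0$. If we had $\probp(A)=0$, then $\cX(C)=\cX$ and the decomposition would force $\cA=\cX$, contradicting the fact that an acceptance set is required to be a strict subset of $\cX$. Hence $\probp(A)>0$ and the desired representation follows. The main subtlety to watch for will be the correct bookkeeping between $C$ from Lemma~\ref{thm: coherent structures} and the set $A$ appearing in the statement; once one sets $A=C^c$, the proof is essentially immediate from the coherent-case reduction $\cA=\rec(\cA)$ applied inside that lemma's conclusion.
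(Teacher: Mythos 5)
Your proof is correct and follows essentially the same route as the paper: the easy direction by inspection, and the nontrivial direction by noting $\cA=\rec(\cA)$ for a closed coherent set and invoking Lemma~\ref{thm: coherent structures} with $A=C^c$. Your explicit check that $\probp(A)>0$ (via the requirement that an acceptance set be a strict subset of $\cX$) is a small detail the paper leaves implicit, but otherwise the arguments coincide.
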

\begin{proof}
The ``if'' implication is clear. To prove the ``only if'' implication, note that $\cA=\Rec(\cA)$ because $\cA$ is a convex cone. Hence, the claim follows immediately from the preceding lemma once we set $A=C^c$.
\end{proof}

\smallskip

\begin{remark}
The above result extends the characterization of weak-star closed, coherent, surplus-invariant acceptance sets obtained in Koch-Medina, Moreno-Bromberg \& Munari \cite{KMM2015} in the context of spaces of bounded random variables. Note that, in contrast to \cite{KMM2015}, the separability of $\cX$ is no longer required.
\end{remark}

\smallskip

The preceding lemma allows us to derive a structural result for general closed, convex, surplus-invariant acceptance sets. Recall that a set $\cD\subset\cX$ is {\em tight}, or {\em bounded in probability}, if for every $\e\in(0,1)$ there exists $M>0$ such that $\probp(-M<X<M)>\e$ for all $X\in\cD$. In other words, tight sets are precisely those sets which are topologically bounded in $L^0$.

\begin{theorem}
\label{main theorem}
Let $\cA\subset\cX$ be a closed, convex acceptance set. Then, $\cA$ is surplus invariant if and only if there exists a measurable partition $\{A,B,C\}$ of $\Omega$ with $\probp(C)<1$, unique up to modifications on sets of nonzero probability, such that
\begin{equation}
\label{representation formula main theorem}
\cA = \cX_+(A)\oplus (\cX_+(B)+\cD_B)\oplus\cX(C)\,,
\end{equation}
where $\cD_B$ is a closed, convex, solid, tight subset of $\cX_-(B)$ such that for every measurable set $E\subset B$ satisfying $\probp(E)>0$ there exists $X\in\cD_B$ with $\probp(1_EX<0)>0$.
\end{theorem}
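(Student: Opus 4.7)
The ``if'' implication is routine: the right-hand side of \eqref{representation formula main theorem} is visibly closed, convex, monotone, and surplus invariant, and is a proper subset of $\cX$ because $\probp(C)<1$ and $\cD_B$ is tight. For the converse, set $\cD:=\cA_-$ and proceed in three stages followed by assembly and uniqueness.

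\emph{Stage 1: the set $C$.} Apply Lemma~\ref{thm: coherent structures} directly to $\cA$ (not to $\rec(\cA)$) to obtain a measurable set, which I call $C$, satisfying $\cA=\cX(C)\oplus 1_{C^c}\cA$, $\rec(\cA)=\cX(C)\oplus\cX_+(C^c)$, and $Z=0$ a.s.\ on $C$ for every $Z\in B(\cA^\infty)$; strictness of $\cA$ forces $\probp(C)<1$.

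\emph{Stage 2: the sets $A$ and $B$ via exhaustion.} The family
\[
\cG:=\{E\in\cF,\ E\subset C^c:\ X=0\ \text{a.s. on}\ E\ \text{for every}\ X\in\cD\cap\cX(C^c)\}
\]
is trivially stable under countable unions; pick $A\in\cG$ of maximal probability and set $B:=C^c\setminus A$ and $\cD_B:=\cD\cap\cX(B)$. A one-line solidity argument shows $\cD_B=1_B\cD$, so closedness, convexity, and solidity of $\cD_B$ are inherited from $\cD$. The witness property is forced by the maximality of $A$: if all $X\in\cD_B$ vanished on some $E\subset B$ of positive probability, solidity would propagate this to every $X\in\cD\cap\cX(C^c)$, yielding $A\cup E\in\cG$.

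\emph{Stage 3 (the main obstacle): tightness of $\cD_B$.} Suppose for contradiction there exist $\varepsilon>0$, $M_n\to\infty$, and $X_n\in\cD_B$ with $\probp(X_n<-M_n)>\varepsilon$. Truncate by solidity to $Y_n:=\max\{X_n,-M_n\}\in\cD_B$ and rescale by convexity (using $0\in\cD_B$) to $W_n:=Y_n/M_n\in\cD_B$, so $-1\leq W_n\leq 0$ and $\E[W_n]<-\varepsilon$. Koml\'os's theorem supplies a subsequence whose Ces\`aro means $\bar W_N$ converge a.s.\ to some $W\in L^\infty$; the dominating sequence $Z_N:=\sup_{M\geq N}|\bar W_M-W|\leq 2$ lies in $L^\infty\subset\cX$ and decreases a.s.\ to $0$, so the $\sigma$-Lebesgue property (or bounded convergence in the $L^\infty$-weak-star case) upgrades this to $\bar W_N\to W$ in $\cX$. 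Closedness and convexity of $\cD_B$ give $W\in\cD_B$; dominated convergence of $\E[\bar W_N]\to\E[W]$ yields $\E[W]\leq-\varepsilon$, so $W\neq 0$. The same convex-combination trick gives $\alpha W_n\in\cD$ whenever $M_n\geq\alpha$; a further Ces\`aro/closedness pass then delivers $\alpha W\in\cD$ for every $\alpha>0$, so $W\in\rec(\cD)\subset\rec(\cA)\cap\cX_-=\cX_-(C)$. But $W$ is supported on $B\subset C^c$, forcing $W=0$, a contradiction.

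\emph{Assembly and uniqueness.} Decomposing any $X\in\cA$ as $1_AX+1_BX+1_CX$ and applying Proposition~\ref{proposition: alternative definitions}(d) puts each summand in $\cA$: $1_AX\in\cX_+(A)$ because $\cD\cap\cX(A)=\{0\}$ (from $A\in\cG$) together with surplus invariance forces $1_AX^-=0$; $1_BX=1_BX^+-1_BX^-\in\cX_+(B)+\cD_B$ with $-1_BX^-\in\cD_B$; and $1_CX\in\cX(C)\subset\cA$ by Stage~1. This gives \eqref{representation formula main theorem}. For uniqueness, given another such partition $\{A',B',C'\}$ with associated $\cD_{B'}$ and any $E\in\cF$ with $\cX(E)\subset\cA$, testing with $X=-n1_E$ as $n\to\infty$: the tightness of $\cD_{B'}$ rules out $\probp(E\cap B')>0$ (otherwise the negative parts would grow unboundedly) and non-negativity on $A'$ rules out $\probp(E\cap A')>0$, so $E\subset C'$; taking $E=C$ and vice versa yields $C=C'$ a.s. Finally, the witness property of $\cD_{B'}$ applied to a hypothetical $E\subset A\cap B'$ of positive probability, decomposed via the original partition, would place a negative element of $\cD$ supported on $A$, contradicting $\cD\cap\cX(A)=\{0\}$; the symmetric argument rules out $A'\cap B$, giving $A=A'$ and hence $\cD_B=\cD_{B'}$ a.s.
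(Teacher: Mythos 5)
Your proof is correct, but it diverges from the paper's argument in two substantive places, so a comparison is in order. The skeleton is the same: both proofs obtain $C$ by applying Lemma~\ref{thm: coherent structures} to $\cA$ itself (so your parenthetical ``not to $\rec(\cA)$'' matches what the paper does anyway), and both end up with $\cD_B=1_B\cA_-$; your exhaustion family $\cG$ merely replaces the paper's definition $A=\{U=0\}$ with $U=\esssup_{X\in\cA}X^-$, and produces the same set up to null sets. The genuine difference is Stage~3: the paper proves tightness of $\cD_B$ dually, in two lines, by pairing acceptable positions with the functional $Z^\ast\in B(\cA^\infty)$ from Lemma~\ref{thm: coherent structures} (strictly positive on $C^c\supset B$) and invoking the finite bound $\sigma_{\cA^\infty}(Z^\ast)>-\infty$ coming from the dual representation of Theorem~\ref{theorem: dual repr convex xs acc}; you instead run a primal argument --- truncation by solidity, scaling by convexity, Koml\'os plus the $\sigma$-Lebesgue property to upgrade dominated a.s.\ convergence of the Ces\`aro means to convergence in $\cX$, and then the recession-cone identity $\rec(\cA)=\cX(C)\oplus\cX_+(C^c)$ to force the limit to vanish. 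Your route checks out (the $W_n$ are uniformly bounded so Koml\'os applies, local solidity handles the domination step, and the second Ces\`aro pass works after discarding the finitely many indices with $M_{n_k}<\alpha$), but it costs an extra external theorem and more bookkeeping, whereas the paper's dual argument reuses machinery already in place; in return, your explicit uniqueness argument is a real addition, since the paper only asserts uniqueness in a remark following the theorem. One small omission: your assembly only establishes $\cA\subset\cX_+(A)\oplus(\cX_+(B)+\cD_B)\oplus\cX(C)$; the reverse inclusion is part of \eqref{representation formula main theorem} and should be recorded --- it follows in one line from the Stage~1 identity $\cA=\cX(C)\oplus1_{C^c}\cA$ together with $\cD_B\subset\cA$ and monotonicity.
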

\begin{proof}
The ``if'' implication is clear, hence we focus on the converse statement. To prove the ``only if'' implication we will define the partition explicitly. By Lemma~\ref{thm: coherent structures} we find $Z^*\in B(\cA^\infty)\cap L^\infty$ so that
\[
\rec(\cA)=\cX(C)\oplus\cX_+(C^c)\,,
\]
where $C=\{ Z^\ast =0\}$. It is clear that $C^c=\{V=0\}$ where
\[
V=\esssup_{X\in\rec(\cA)}X^-\,.
\]
Set now
\[
A:=\{U=0\} , \ \ \ B:=\{V=0\}\setminus\{U=0\}\,,
\]
where
\[
U=\esssup_{X\in\cA}X^-\,.
\]
Then, it follows that $\{A,B,C\}$ is a measurable partition of $\Omega$ satisfying
\[
\cA = 1_A\cA\oplus1_B\cA\oplus1_C\cA = \cX_+(A)\oplus1_B\cA\oplus \cX(C)\,.
\]
From the definition of $B$ it is immediate to see that, whenever $\probp(B)>0$ and $E$ is a measurable subset of $B$ with strictly positive probability, we always find $X\in\cD_B$ such that $\probp(1_EX<0)>0$.

\smallskip

To conclude the proof, we only need to show that the closed, convex, solid set $\cD_B=1_B\cA_-$ is tight. To this effect, assume $\cD_B$ is not tight so that we find $\varepsilon\in(0,1)$ such that for every $M>0$ there exists $X_M\in\cD_B$ with $\probp(X_M<-M)>\varepsilon$. Noting that $\{X_M<-M\}\subset B$ and recalling that $C^c=\{Z^\ast>0\}$, we immediately obtain
\[
-M\E[1_{\{X_M<-M\}}Z^\ast] \geq \E[X_MZ^\ast] \geq \sigma_{\cA^\infty}(Z^\ast)>-\infty
\]
for any $M>0$, which is impossible since the expectation on the left-hand side is strictly positive by the assumption that $\probp(X_M<-M)>\varepsilon$. It follows that $\cD_B$ is tight and this concludes the proof.
\end{proof}

\smallskip

\begin{remark}
The last condition stated in the preceding theorem, namely that for every measurable set $E\subset B$ satisfying $\probp(E)>0$ there exists $X\in\cD_B$ with $\probp(1_EX<0)>0$, ensures the uniqueness of the above decomposition. In particular, this ``sensitivity'' condition implies that the event $A$ is the maximal measurable subset of $\Omega$ on which acceptable positions are required to be positive.
\end{remark}

\smallskip

The preceding results have interesting financial implications. Indeed, acceptability can be described by requirements on the behaviour of capital positions on each of the ``atoms'' of a measurable partition $\{A,B,C\}$ of $\Omega$: no defaults are allowed on $A$, a ``controlled'' default is allowed on $B$, and no requirements are imposed on $C$. The second condition will be made precise in Section \ref{sect: stochastic boundedness} by elaborating on the notion of tightness.

\smallskip

We illustrate the structure of surplus invariance by applying our previous result to the concrete examples of convex, surplus-invariant acceptance sets introduced above.

\begin{example}[Shortfall Risk]
\label{ex: shortfall risk decomposition}
Assume $\cX$ is continuously embedded into $L^1$. For any loss function $\ell:\R_+\to\R_+$ and $c>0$ the surplus-invariant acceptance set
\[
\cA = \{X\in\cX \,; \ \E[\ell(X^-)]\leq c\}
\]
is closed and convex. In this case, we have $\probp(B)=1$.

\smallskip

To prove the statement, we will first show that $\Rec(\cA)=\cX_+$. The inclusion ``$\supset$'' is clear. To prove the converse inclusion, take $X\in\Rec(\cA)$ and assume that $X\notin\cX_+$ so that $\probp(X\leq-\e)>0$ for some $\e>0$. Moreover, take $\gamma>0$ satisfying $\ell(\gamma\e)>0$. Since $X\in\Rec(\cA)$, we need to have
\[
c \geq \E[\ell((\alpha\gamma X)^-)] \geq \alpha\E[\gamma1_{\{X\leq-\e\}}\ell(X^-)] \geq \alpha\ell(\gamma\e)\probp(X\leq-\e)
\]
for every $\alpha>1$ by convexity, which is clearly not possible. As a result, $\Rec(\cA)=\cX_+$ holds.

\smallskip

Now, since $\Rec(\cA)=1_{B^c}\cA+1_B\cX_+$ as a consequence of the above decomposition theorem and since $\cA$ clearly contains positions that are strictly negative almost surely, such as $X=-\e1_\Omega$ for $0<\e<c$, we conclude that $\probp(B^c)=0$ must hold so that $\probp(B)=1$.
\end{example}

\smallskip

\begin{example}[Test Scenarios]
Consider a set $E\in\cF$ satisfying $\probp(E)>0$. The surplus-invariant acceptance set
\[
\cA = \{X\in\cX \,; \ 1_EX\geq0\}
\]
is closed and coherent. Since $\cA=\cX_+(E)$, we easily see that we can decompose $\cA$ by taking $A=E$ and $B=C=\emptyset$.
\end{example}

\smallskip

\begin{example}[Expected Tail Loss]
\label{ex: expected tail loss decomposition}
Assume $\cX$ is continuously embedded into $L^1$. For any $\alpha\in(0,1)$ and $c\geq0$ the surplus-invariant acceptance set
\[
\cA = \{X\in\cX \,; \ \ES_\alpha(-X^-)\leq c\}
\]
is closed and convex. Moreover, it is coherent if and only if $c=0$, in which case $\cA=\cX_+$. We claim that, if $c>0$, then $\probp(B)=1$.

\smallskip

To see this, note first that we must have $\Rec(\cA)=\cX_+$ by positive homogeneity. Since $\Rec(\cA)=1_{B^c}\cA+1_B\cX_+$ as a consequence of the above decomposition theorem and since $\cA$ clearly contains positions that are strictly negative almost surely, such as $X=-\e1_\Omega$ for $0<\e<c$, we conclude that $\probp(B^c)=0$ must hold so that $\probp(B)=1$.
\end{example}

\smallskip

We conclude by showing that, with the clear exception of atomic spaces with a finite number of atoms, the decomposition obtained in Theorem \ref{main theorem} does not hold if we equip $L^\infty$ with the norm topology.

\begin{example}[$L^\infty$ with the strong topology]
Let $(\Omega,\cF,\probp)$ be a probability space admitting an infinite partition $(B_n)$ of $\Omega$ consisting of measurable sets of nonzero probability. This is equivalent to $L^\infty$ being infinite dimensional. Define the increasing sequence $(A_n)$ by setting
\[
A_n := \bigcup_{k=1}^n B_k
\]
for every $n\in\N$. By construction we have $\probp(A_{n+1}\setminus A_n)=\probp(B_{n+1})>0$ for each $n\in\N$ and $\bigcup_{n}A_n=\Omega$. For each $n\in\N$ the set
\[
\cD_n := L^\infty_-(A_n)
 \]
is norm-closed, convex and solid. Moreover, note that $(\cD_n)$ is increasing. Hence, the union
\[
\cD := \bigcup_{n\in\N}\cD_n
\]
is easily seen to be a convex, solid subset of $L^\infty_-$. It is also clear that $\cD$ is not tight. Now, denote by $\overline{\cD}$ the closure of $\cD$ in the norm topology and consider the closed, convex, surplus invariant, monotone set
\[
\cA := \overline{\cD}+L^\infty_+\,.
\]
Since $\probp(X\ge -\frac{1}{2})>0$ holds for every $X\in\overline{\cD}$, we see that $\overline{\cD}\neq L^\infty_-$. In particular, $\cA$ is an acceptance set. We claim that $\cA$ cannot be decomposed as in Theorem~\ref{main theorem}. For assume we could write $\cA$ in the standard form \eqref{representation formula main theorem}. In this case, we would have $\overline{\cD}=\cD_B\oplus L^\infty_-(C)$, where $\cD_B$ is tight. Since $1_B\overline{\cD}$ is not tight, we must have $\probp(B)=0$ so that $\overline{\cD}=L^\infty_-(C)$. This would imply that $-1_{A_n}\in L^\infty_-(C)$, hence $\probp(A_n)\leq\probp(C)$, for any $n\in\N$. However, this would be possible only if $\probp(C)=1$, in contradiction to $\overline{\cD}\neq L^\infty_-$.
\end{example}


\subsection{Stochastic boundedness and surplus invariance}
\label{sect: stochastic boundedness}

In the previous section we have proved that any closed, convex, surplus-invariant acceptance set $\cA\subset\cX$ can be decomposed as
\[
\cA = \cX_+(A)\oplus (\cX_+(B)+\cD_B)\oplus\cX(C)
\]
for a suitable partition $\{A,B,C\}$ of $\Omega$. In particular, the set $\cD_B$ consists of acceptable default options and was shown to be tight, or bounded in probability. In this section we focus on the set $\cD_B$ and show how to interpret this ``boundedness'' property from a capital adequacy perspective.

\smallskip

The distribution function of a random variable $X\in L^0$ will be denoted by $F_X$, i.e. we set $F_X(t):=\probp(X\le t)$ for every $t\in\R$. Recall that if $X$ and $Y$ are two random variables, $X$ is said to be {\em (first order) stochastically preferred to} $Y$, denoted by $X \stle Y$, if $F_X(t)\leq F_Y(t)$ for every $t\in\R$. A set $\cD\subset L^0$ is {\em (first order) stochastically bounded below} by a random variable $X^\ast$ if every element of $\cD$ is (first order) stochastically preferred to $X^\ast$, i.e. if $X\stle X^\ast$ for all $X\in\cD$.

\smallskip

\begin{remark}
Note that the above notions of stochastic preference and stochastic boundedness depend only on the distribution of the involved random variables. In particular, as we will do later in the context of some examples, one need not assume that a stochastic bound for a set $\cD\subset L^0$ is defined on our fixed probability space $(\Omega,\cF,\probp)$.
\end{remark}

\smallskip

The following result establishes that, for a subset of $L^0_-$, tightness is equivalent to stochastic boundedness.

\begin{proposition}
\label{prop: tight and stoch bounded}
A set $\cD\subset L^0_-$ is tight if and only if it is stochastically bounded below.
\end{proposition}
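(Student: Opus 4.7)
The plan is to handle the two implications separately, with the non-trivial direction being \emph{tight} $\Rightarrow$ \emph{stochastically bounded below}, where I construct an explicit bound by forming the pointwise supremum of the distribution functions of elements of $\cD$ and then taking its right-continuous regularization.

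For the easy implication, suppose $\cD$ is stochastically bounded below by some $X^\ast\in L^0$, so $F_X(t)\le F_{X^\ast}(t)$ for every $X\in\cD$ and $t\in\R$. Fix $\varepsilon\in(0,1)$. Since $F_{X^\ast}(t)\to 0$ as $t\to-\infty$, I can choose $M>0$ with $F_{X^\ast}(-M)<1-\varepsilon$. Then for every $X\in\cD$, using $X\le0$ almost surely,
\[
\probp(-M<X<M)=\probp(X>-M)=1-F_X(-M)\ge 1-F_{X^\ast}(-M)>\varepsilon,
\]
which is tightness. (Strictly speaking, $F_X$ is right-continuous, so enlarging $M$ slightly if necessary keeps the inequality strict; this is a routine adjustment.)

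For the main direction, assume $\cD\subset L^0_-$ is tight and define
\[
G(t):=\sup_{X\in\cD}F_X(t),\qquad t\in\R.
\]
The function $G$ is non-decreasing and takes values in $[0,1]$. Since every $X\in\cD$ satisfies $X\le 0$, one has $G(t)=1$ for all $t\ge 0$. Tightness translates directly into $G(t)\to 0$ as $t\to-\infty$: given $\delta\in(0,1)$, applying the definition with $\varepsilon=1-\delta$ yields $M>0$ with $1-F_X(-M)>1-\delta$ for all $X\in\cD$, i.e.\ $G(-M)\le\delta$. Let $G^\ast(t):=\inf_{s>t}G(s)$ be the right-continuous regularization. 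Monotonicity of $G$ gives $G^\ast\ge G$ and $G^\ast(t)\le G(t+1)$, so $G^\ast$ inherits the limits $0$ and $1$ at $\pm\infty$ and is by construction non-decreasing and right-continuous. Hence $G^\ast$ is a genuine cumulative distribution function.

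It remains to realize a random variable $X^\ast$ with $F_{X^\ast}=G^\ast$ (on an auxiliary probability space, as the Remark preceding the proposition explicitly allows), for instance via the quantile transform applied to a uniform on $(0,1)$. Then for every $X\in\cD$ and every $t\in\R$,
\[
F_X(t)\le G(t)\le G^\ast(t)=F_{X^\ast}(t),
\]
which is precisely $X\stle X^\ast$, so $\cD$ is stochastically bounded below by $X^\ast$. The part I expect to require the most care is not any single deduction but making sure that $G^\ast$ is a bona fide distribution function: both behaviours at $\pm\infty$ must be checked, and the first uses tightness in an essential way while the second uses the hypothesis $\cD\subset L^0_-$. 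Everything else is bookkeeping.
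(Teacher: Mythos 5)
Your proof is correct, and for the nontrivial direction it takes a genuinely different route from the paper. You form $G=\sup_{X\in\cD}F_X$, pass to its right-continuous envelope $G^\ast$, verify that $G^\ast$ is a genuine distribution function (tightness giving the limit $0$ at $-\infty$, the hypothesis $\cD\subset L^0_-$ giving the value $1$ on $[0,\infty)$), and realize it via a quantile transform on an auxiliary probability space --- which the remark preceding the proposition explicitly permits. The paper instead builds the bound on the original space $(\Omega,\cF,\probp)$: it distinguishes the case where $\cD$ is uniformly bounded (take $X^\ast=-M1_\Omega$) from the general case, where it extracts a countable measurable partition $(A_n)$ of $\Omega$ with nonzero probabilities and sets $X^\ast=-\sum_n\alpha_n 1_{A_n}$ for an increasing sequence $(\alpha_n)$ calibrated against $G$. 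What each approach buys: yours is shorter and quietly repairs a subtlety that the paper's own follow-up remark glosses over --- a pointwise supremum of distribution functions need not be right-continuous, so it is not itself a distribution function without the regularization step your $G^\ast\geq G$ provides; the paper's construction, on the other hand, keeps $X^\ast$ inside $L^0_-(\Omega,\cF,\probp)$, which is exactly why the later theorem can literally assert ``there exists $X^\ast\in L^0_-$'' for the $\VaR$ and $\ES$ bounds without any appeal to distributional invariance, and this is the stated reason the authors prefer their version in the remark following their proof.
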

\begin{proof}
We show first the ``if'' implication. Assume $X^\ast\in L^0$ is a stochastic bound for $\cD$. In particular, note that $X^\ast$ must be negative. Now, for any $\e\in(0,1)$ we find $M>0$ large enough to satisfy $\probp(X^\ast>-M)>\e$. Since $\probp(X>-M)\geq\probp(X^\ast>-M)$ for all $X\in\cD$, we conclude that $\cD$ is tight.

\smallskip

To prove the converse implication, assume $\cD$ is tight and consider the function $F:\R_+\to[0,1]$ defined by setting
\[
G(x) := \sup_{X\in\cD}F_X(-x)\,.
\]
It is clear that $G$ is decreasing and satisfies
\begin{equation}\label{tightness consequence}
\lim_{x\to\infty}G(x) = 0\,.
\end{equation}
The last assertion follows directly from the tightness of $\cD$. The proof now reduces to showing that there exists a random variable $X^\ast\in L^0_-$ satisfying
\begin{equation}
\label{eq: goal stochastic bound}
F_{X^\ast}(-x)\geq G(x) \ \ \mbox{for every} \ x>0\,.
\end{equation}
Assume first that there exists $M>0$ such that $G(M)=0$, i.e. the set $\cD$ is uniformly bounded in $L^\infty$. Then, the random variable $X^\ast:=-M1_\Omega$ is easily seen to be a stochastic bound for $\cD$.

\smallskip

Assume next that for every $M>0$ we have $G(M)>0$. Together with~\eqref{tightness consequence}, this implies that for every $\e\in(0,1)$ there exists $M>0$ such that
\[
0 < \sup_{X\in\cD}F_X(-M) \leq \e\,.
\]
In this case, it is not difficult to show that we can find a countable partition $(A_n)$ of $\Omega$ consisting of measurable sets with nonzero probability. Now, consider an increasing sequence $(\alpha_n)$ of strictly positive numbers such that $G(\alpha_n)<1-\sum_{k=1}^n \probp(A_k)$ and set
\[
X^\ast := -\sum_{n=1}^\infty \alpha_n\indic_{A_n}\,.
\]
We claim that \eqref{eq: goal stochastic bound} holds. To this end, take $x>0$ and choose the smallest $n_0\in\N$ such that $G(x)\geq 1-\sum_{k=1}^{n_0}\probp(A_k)$. Since $G$ is decreasing, we have $\alpha_n\geq x$ for each $n\geq n_0$. Thus
\[
F_{X^\ast}(-x) \geq \sum_{k=n_0}^\infty \probp(A_k) = 1-\sum_{k=1}^{n_0-1}\probp(A_k) > G(x)\,,
\]
proving \eqref{eq: goal stochastic bound} and concluding the proof of the proposition.
\end{proof}

\smallskip

\begin{remark}
One could also prove that the function $G$ defined in the above proof is right continuous and infer the existence of a random variable on a nonatomic probability space having $1-G$ as its distribution function. We prefer the above proof because it does not require us to consider random variables on a probability space that is different from our underlying probability space $(\Omega,\cF,\probp)$.
\end{remark}

\smallskip

We proceed to apply the preceding characterization of tightness to complement the representation result obtained in Theorem \ref{main theorem}. To this end, recall that the {\em Value-at-Risk} ($\VaR$) of $X\in L^0$ at the level $\alpha\in(0,1)$ is defined as
\[
\VaR_\alpha(X) := \inf\{t\in\R \,; \ \probp(X+t<0)\leq\alpha\}\,.
\]
Moreover, the {\em Expected Shortfall} ($\ES$) of $X$ at level $\alpha$ is given by
\[
\ES_\alpha(X) := \frac{1}{\alpha}\int_0^\alpha\VaR_\beta(X)d\beta\,.
\]

\smallskip

The first part of the following simple lemma recalls a well-known characterization of first order stochastic dominance. The second part is an obvious corollary and is an equivalent formulation that first order stochastic dominance implies second order stochastic dominance.

\begin{lemma}
\label{lemma: stochastic dominance and var}
For any random variables $X,Y\in L^0$, we have
\[
X\stle Y \ \iff \ \VaR_\alpha(X)\le \VaR_\alpha(Y) \ \ \mbox{for every} \ \alpha\in(0,1)\,.
\]
In particular,
\[
X\stle Y \ \implies \ \ES_\alpha(X)\le \ES_\alpha(Y) \ \ \mbox{for every} \ \alpha\in(0,1)\,.
\]
\end{lemma}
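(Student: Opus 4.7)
The plan is to identify $-\VaR_\alpha(X)$ with the upper $\alpha$-quantile of $X$ and then invoke the classical equivalence between first order stochastic dominance and pointwise ordering of quantile functions. Starting from the definition, using $\P(X+t<0)=F_X((-t)^-)$ and setting $s=-t$, I would first rewrite
\[
\VaR_\alpha(X) = -\sup\{s\in\R \,; \ F_X(s^-)\le\alpha\} = -\inf\{s\in\R \,; \ F_X(s)>\alpha\},
\]
where the second equality is a standard manipulation based on the right continuity of $F_X$ (any $s$ with $F_X(s^-)\le\alpha$ satisfies $s\le q_X^+(\alpha):=\inf\{s : F_X(s)>\alpha\}$, and conversely any $s<q_X^+(\alpha)$ has $F_X(s)\le\alpha$, hence belongs to $\{s : F_X(s^-)\le\alpha\}$).

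For the forward implication, if $F_X\le F_Y$ pointwise, then $\{s\in\R \,; \ F_Y(s)>\alpha\}\subseteq\{s\in\R \,; \ F_X(s)>\alpha\}$, so the infima satisfy the reverse inclusion, and after negating we obtain $\VaR_\alpha(X)\le\VaR_\alpha(Y)$ for each $\alpha\in(0,1)$.

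For the converse I would argue by contraposition. Suppose $F_X(s_0)>F_Y(s_0)$ at some $s_0\in\R$ and choose $\alpha\in(F_Y(s_0),F_X(s_0))\subset(0,1)$. Then $s_0\in\{s : F_X(s)>\alpha\}$, whence $-\VaR_\alpha(X)\le s_0$. On the other hand, since $F_Y(s_0)<\alpha$ and $F_Y$ is right continuous, there is $\delta>0$ with $F_Y<\alpha$ on $[s_0,s_0+\delta]$, so that $-\VaR_\alpha(Y)\ge s_0+\delta$. Combining these two inequalities yields $\VaR_\alpha(X)\ge -s_0>-s_0-\delta\ge\VaR_\alpha(Y)$, contradicting the assumed pointwise $\VaR$-ordering. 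The only real subtlety in the proof is precisely this step: using right continuity of $F_Y$ to promote the strict inequality $F_Y(s_0)<\alpha$ into a strict separation of the two quantiles, so as to obtain a genuine contradiction rather than just equality of the VaRs.

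The ``in particular'' statement is then immediate from the representation
\[
\ES_\alpha(X) = \frac{1}{\alpha}\int_0^\alpha\VaR_\beta(X)d\beta,
\]
by integrating the pointwise inequality $\VaR_\beta(X)\le\VaR_\beta(Y)$ (which holds for every $\beta\in(0,1)$ by the first part) over $\beta\in(0,\alpha)$ and dividing by $\alpha$.
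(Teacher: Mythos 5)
Your argument is sound and, in fact, supplies more than the paper does: the paper states this lemma without proof, invoking the well-known equivalence between first order stochastic dominance and the pointwise ordering of quantile functions, which is precisely the identification $-\VaR_\alpha(X)=\sup\{s\in\R:F_X(s^-)\le\alpha\}=\inf\{s\in\R:F_X(s)>\alpha\}$ that you work out explicitly. The rewriting of $\VaR$, the contrapositive argument for the converse (where right continuity of $F_Y$ is indeed the essential point, since it upgrades $F_Y(s_0)<\alpha$ to the strict separation $-\VaR_\alpha(Y)\ge s_0+\delta>s_0\ge-\VaR_\alpha(X)$; without it one would only get equality of the two $\VaR$'s and no contradiction), and the integration step for Expected Shortfall are all correct. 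One step is mis-stated, however: in the forward implication, from $F_X\le F_Y$ the inclusion you obtain is $\{s\in\R:F_X(s)>\alpha\}\subseteq\{s\in\R:F_Y(s)>\alpha\}$ (if $F_X(s)>\alpha$ then $F_Y(s)\ge F_X(s)>\alpha$), not the one you wrote; the inclusion as written would force $\inf\{s:F_Y(s)>\alpha\}\ge\inf\{s:F_X(s)>\alpha\}$ and hence the reverse inequality $\VaR_\alpha(Y)\le\VaR_\alpha(X)$. With the inclusion corrected, the infima satisfy $\inf\{s:F_X(s)>\alpha\}\ge\inf\{s:F_Y(s)>\alpha\}$, and negating gives the claimed $\VaR_\alpha(X)\le\VaR_\alpha(Y)$, so this is a direction slip that is immediately repairable rather than a genuine gap in the method.
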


\smallskip

The preceding lemma allows us to link the ``boundedness'' of the set of acceptable default profiles $\cD_B$ to a stringent control of the $\VaR_\alpha$ and $\ES_\alpha$ of its elements for arbitrary $\alpha\in(0,1)$.

\begin{theorem}
Let $\cA\subset \cX$ be a closed, convex, surplus-invariant acceptance set with standard decomposition
\[
\cA = \cX_+(A)\oplus (\cX_+(B)+\cD_B)\oplus\cX(C)\,.
\]
Then, there exists $X^\ast\in L^0_-$ such that
\[
\VaR_\alpha(\indic_B X)\leq\VaR_\alpha(X^\ast) \ \ \mbox{for every} \ X\in\cA, \ \alpha\in(0,1)\,.
\]
In particular, if $X^\ast\in L^1_-$ then
\[
\ES_\alpha(\indic_B X)\leq\ES_\alpha(X^\ast), \ \ \mbox{for every} \ X\in\cA, \ \alpha\in(0,1)\,.
\]
\end{theorem}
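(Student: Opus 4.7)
The plan is to combine the tightness of $\cD_B$ from Theorem \ref{main theorem} with the equivalence ``tight $\Leftrightarrow$ stochastically bounded below'' from Proposition \ref{prop: tight and stoch bounded}, and then transfer the resulting bound from $\cD_B$ to arbitrary elements of $1_B\cA$ by a pointwise comparison argument.

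First, Theorem \ref{main theorem} tells us that $\cD_B$ is a closed, convex, solid, tight subset of $\cX_-(B)\subset L^0_-$. Applying Proposition \ref{prop: tight and stoch bounded} to $\cD_B$ therefore produces some $X^\ast\in L^0_-$ with $D\stle X^\ast$ for every $D\in\cD_B$. This $X^\ast$ will be the witness claimed in the theorem.

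Next, fix $X\in\cA$ and $\alpha\in(0,1)$. By surplus invariance (Proposition \ref{proposition: alternative definitions}), $-X^-\in\cA$ and then also $1_B(-X^-)=-1_BX^-\in\cA$; since this element lies in $\cA_-$ and $\cD_B=1_B\cA_-$ (as identified in the proof of Theorem \ref{main theorem}), we obtain $-1_BX^-\in\cD_B$ and consequently $-1_BX^-\stle X^\ast$. The trivial pointwise inequality $X\geq-X^-$, multiplied by $1_B\geq0$, yields $1_BX\geq-1_BX^-$, whence $F_{1_BX}(t)\leq F_{-1_BX^-}(t)\leq F_{X^\ast}(t)$ for every $t\in\R$, i.e. $1_BX\stle X^\ast$. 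Lemma \ref{lemma: stochastic dominance and var} then gives both $\VaR_\alpha(1_BX)\leq\VaR_\alpha(X^\ast)$ and, under the additional assumption $X^\ast\in L^1_-$, $\ES_\alpha(1_BX)\leq\ES_\alpha(X^\ast)$.

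No step poses any serious difficulty: the only point worth double-checking is the identification $\cD_B=1_B\cA_-$, which is recorded explicitly in the proof of Theorem \ref{main theorem} and is what makes the inclusion $-1_BX^-\in\cD_B$ immediate. Once this is in hand, the argument is a direct chaining of Proposition \ref{prop: tight and stoch bounded} and Lemma \ref{lemma: stochastic dominance and var}.
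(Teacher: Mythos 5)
Your proposal is correct and follows essentially the same route as the paper: tightness of $\cD_B$ from Theorem \ref{main theorem}, the equivalence with stochastic boundedness from Proposition \ref{prop: tight and stoch bounded}, and Lemma \ref{lemma: stochastic dominance and var} to pass to $\VaR$ and $\ES$. The only difference is that you spell out the transfer from $\cD_B$ to all of $\cA$ (via $-1_BX^-\in\cD_B=1_B\cA_-$ and the pointwise bound $1_BX\geq-1_BX^-$), a step the paper leaves implicit in its ``follow immediately''.
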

\begin{proof}
By Theorem \ref{main theorem}, the set $\cD_B$ is a tight subset of $\cX_-(B)$. Hence, Proposition \ref{prop: tight and stoch bounded} implies that $\cD_B$ is stochastically bounded by some $X^\ast\in L^0_-$. The two assertions follow immediately from Lemma~\ref{lemma: stochastic dominance and var}.
\end{proof}

\smallskip

The above result is also important since it provides a hint on how to {\em construct} closed, convex, surplus-invariant acceptance sets that allow ``controlled'' defaults but have no ``uncontrolled'' scenarios.

\begin{example}
\label{cor: construction of surplus invariance via ES}
Assume $\cX$ is continuously embedded into $L^1$ and consider a measurable partition $\{A,B\}$ of $\Omega$. Moreover, take $X^\ast\in \cX_-$. Then, the set
\[
\cA = \cX_+(A)\oplus (\cX_+(B)+\cD_B)
\]
where
\[
\cD_B = \{X\in \cX_-(B) \,; \ \ES_\alpha(X)\leq\ES_\alpha(X^\ast), \ \forall \alpha\in(0,1)\}
\]
is a closed, convex, surplus-invariant acceptance set in $\cX$.

\smallskip

Indeed, it is enough to prove that $\cD_B$ is closed, convex and tight in $\cX$. Since $\ES_\alpha$ is continuous and convex on $L^1$, the set $\cD_B$ is clearly closed and convex. To prove tightness, recall that $\ES_\alpha(X)\to\E[-X]$ as $\alpha\to1$ for any $X\in L^1$. As a result, it follows that $\cD_B$ is bounded in $L^1$ and hence, {\em a fortiori}, in $L^0$.
\end{example}

We conclude this section by illustrating the preceding result in the context of the concrete examples of convex (noncoherent), surplus-invariant acceptance sets discussed above.

\begin{example}[Shortfall Risk]
Assume $\cX$ is continuously embedded into $L^1$. Consider a loss function $\ell:\R_+\to\R_+$ and take $c>0$. It follows from Example~\ref{ex: shortfall risk decomposition} that the closed, convex, surplus-invariant acceptance set
\[
\cA = \{X\in\cX \,; \ \E[\ell(X^-)]\leq c\}
\]
satisfies $\probp(B)=1$, so that $\cA_-$ is tight. We want to exhibit a stochastic bound for $\cA_-$.

\smallskip

Note that, for every $X\in\cA_-$, we have
\[
c\geq\E[\ell(X^-)]\geq \E[1_{\{X\leq x\}}\ell(-x)] = \ell(-x)\probp(X\leq x)=\ell(-x)F_X(x),
\]
where $x\leq 0$ is arbitrary. Now, choose $x^\ast<0$ such that $\ell(-x^\ast)=c$ and define a function $F:\R\to[0,1]$ by setting
\[
F(x) =
\begin{cases}
\frac{c}{\ell(-x)} & \mbox{if} \ x\leq x^\ast,\\
1 & \mbox{if} \ x>x^\ast.
\end{cases}
\]
Then, any random variable with distribution function $F$ is easily seen to be a stochastic bound (possibly defined on a different probability space) for $\cA_-$.
\end{example}

\smallskip

\begin{example}[Expected Tail Loss]
Assume $\cX$ is continuously embedded into $L^1$. Take $\alpha\in(0,1)$ and $c>0$. In Example \ref{ex: expected tail loss decomposition} we saw that the closed, convex, surplus-invariant acceptance set
\[
\cA = \{X\in\cX \,; \ \ES_\alpha(-X^-)\leq c\}
\]
satisfies $\probp(B)=1$, so that $\cA_-$ is tight. We want to exhibit a stochastic bound for $\cA_-$.

\smallskip

First, recall from Theorem~4.52 and Remark~4.53 in F\"{o}llmer \& Schied~\cite{FoellmerSchied2011} that we can write
\[
\ES_\alpha(X) = \max_{Z\in \mathcal{Q}_\alpha}\E[-XZ]
\]
for every $X\in\cX$, where the class of random variables over which we are taking the maximum is given by
\[
\mathcal{Q}_\alpha = \Big\{Z\in \cX_+ \,; \ \|Z\|_\infty\leq\frac1\alpha, \ \E[Z]=1\Big\}\,.
\]
In fact, the maximum is attained by the random variable $Z^\ast = \frac1\alpha(1_{\{X<q\}} + \kappa1_{\{X=q\}})$, where $q$ is an $\alpha$-quantile of $X$ and $\kappa\geq0$ is chosen in such a way that $\E[Z]=1$. Now, take an arbitrary $X\in\cA_-$. Since
\[
\alpha c \geq \alpha\ES_\alpha(X) = \E[-X(1_{X<q} + \kappa1_{X=q})]\geq -q\probp[X<q]-q\kappa\probp[X=q] = -\alpha q
\]
by definition of an $\alpha$-quantile, it follows that $q\geq -c$ holds. Hence, for every $x<-c$ we obtain
\[
\alpha c \geq \alpha\ES_\alpha(X) \geq \E[-X1_{X<q}] \geq \E[-X1_{X\leq x}] \geq -x\probp(X\leq x)\,.
\]
Finally, define a function $F:\R\to[0,1]$ by setting
\[
F(x) =
\begin{cases}
-\frac{\alpha c}{x} & \mbox{if} \ x<-c,\\
1 & \mbox{if} \ x\geq-c.
\end{cases}
\]
Then, any random variable with distribution function $F$ is easily seen to be a stochastic bound (possibly defined on a different probability space) for $\cA_-$.
\end{example}


\section{Num\'{e}raire invariance}

Coherent surplus-invariant acceptance sets are related to the concept of num\'{e}raire invariance discussed next. Throughout this section we assume that
\begin{equation}
\label{module condition}
R\in L^\infty, \ \ X\in\cX \ \implies \ RX\in\cX\,.
\end{equation}
Note that this condition implies the localization condition \eqref{localization} assumed thus far.

\smallskip

Recall that we had expressed capital positions in a fixed unit of account, say some fixed currency. The process of changing the accounting currency can be described by means of a random variable $R$ that is strictly positive almost surely, representing the exchange rate from the original into the new currency. For a capital position $X\in \cX$ expressed in the original currency, the random variable $RX$ represents the position of the {\em same} company expressed in the new currency. By our assumption \eqref{module condition} the product $RX$ still belongs to $\cX$. Assume $\cA\subset\cX$ is a capital adequacy test in the original currency, then if $Y$ is a capital position expressed in the new currency, then $Y$ is acceptable if and only if $\frac{Y}{R}\in\cA$, which is the case if and only if
\[
Y\in R\cA\,.
\]
Hence, the particular form of the test changes whenever the currency changes even though the capital positions still belong to $\cX$. This translation step needs to be borne in mind when applying a capital adequacy test across various jurisdictions in their respective currency. This is particulary important when trying to harmonize solvency regimes across the globe. However, discussions about harmonization typically revolve around the choice of one test --- e.g. based on Value-at-Risk or Expected Shortfall --- that is applied across jurisdictions without executing the translation step. If the translation step is not performed, then the only way that the test be consistent across jurisdictions --- i.e. that exactly the same capital positions be accepted in all jurisdictions --- is to require {\em num\'{e}raire invariance}.

\smallskip

For convenience, a random variable $R\in L^\infty_+$ which is strictly positive almost surely will be called a {\em rescaling factor}. Thus, rescaling factors are random variables that qualify to represent a change of unit of account.

\begin{definition}
An acceptance set $\cA\subset \cX$ is num\'{e}raire invariant if for every rescaling factor $R$ we have
\begin{equation}
\label{rescaling condition}
X\in\cA \ \implies \ RX\in\cA\,.
\end{equation}
\end{definition}

\smallskip

We start by showing that we can use a smaller or larger class of rescaling factors without changing the concept of num\'{e}raire invariance. Recall that a random variable $R\in L^\infty_+$ is said to be {\em bounded away from zero} if $\probp(R>\varepsilon)=1$ for some $\varepsilon>0$.

\begin{lemma}
\label{numeraire invariance version}
Assume $\cA\subset \cX$ is a closed acceptance set. Then, $\cA$ is num\'{e}raire invariant if and only if any of the following equivalent conditions are satisfied:
\begin{enumerate}[(a)]
  \item $X\in\cA$ implies $RX\in\cA$ for any $R\in L^\infty_+$;
  \item $X\in\cA$ implies $RX\in\cA$ for any $R\in L^\infty_+$ that is bounded away from zero.
\end{enumerate}
\end{lemma}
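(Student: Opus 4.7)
The plan is to first dispose of two trivial implications and then establish the two nontrivial directions by closure arguments, approximating a general $R \in L^\infty_+$ (in one direction) and a general rescaling factor (in the other direction) by elements that are bounded away from zero.

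Since every rescaling factor belongs to $L^\infty_+$, condition (a) immediately implies num\'eraire invariance. Conversely, a random variable $R \in L^\infty_+$ that is bounded away from zero is strictly positive almost surely, hence qualifies as a rescaling factor, so num\'eraire invariance implies (b). It therefore suffices to prove the implications num\'eraire invariance $\Rightarrow$ (a) and (b) $\Rightarrow$ num\'eraire invariance.

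For num\'eraire invariance $\Rightarrow$ (a), I would fix $R \in L^\infty_+$ and $X \in \cA$, and set $R_n := R + \frac{1}{n}$. Each $R_n$ lies in $L^\infty_+$ and is bounded away from zero (by $\frac{1}{n}$), hence is a rescaling factor; by assumption \eqref{module condition} we have $R_n X \in \cX$, and num\'eraire invariance gives $R_n X \in \cA$. Since $R_n X - RX = \frac{1}{n} X \to 0$ in $\cX$ by continuity of scalar multiplication, closedness of $\cA$ yields $RX \in \cA$.

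For (b) $\Rightarrow$ num\'eraire invariance, I would fix a rescaling factor $R$ and $X \in \cA$, and this time truncate from below by setting $R_n := R \vee \frac{1}{n}$. Each $R_n$ belongs to $L^\infty_+$ and is bounded away from zero, so (b) gives $R_n X \in \cA$. The key observation is that $0 \le R_n - R \le \frac{1}{n}$ and that $R_n - R$ decreases almost surely to zero, so that the nonnegative sequence $|R_n X - RX| = (R_n - R)|X|$ satisfies $|R_n X - RX| \downarrow 0$ almost surely. In the Fr\'echet lattice setting this forces $R_n X - RX \to 0$ in $\cX$ via the $\sigma$-Lebesgue property \eqref{sigma lebesgue} applied to $(R_n - R)|X|$, combined with the local solidity of the topology to pass from convergence of absolute values to convergence of $R_n X - RX$. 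In the $L^\infty$ weak-star setting the stronger estimate $\|R_n X - RX\|_\infty \le \frac{1}{n}\|X\|_\infty$ gives outright norm convergence, which is stronger than weak-star convergence. Closedness of $\cA$ then delivers $RX \in \cA$.

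The main obstacle is precisely the justification of $R_n X \to RX$ in $\cX$ in the Fr\'echet lattice case: one has to apply the $\sigma$-Lebesgue property to the nonnegative, monotonically decreasing sequence $(R_n - R)|X| \in \cX$ and then invoke local solidity to transfer this convergence to $R_n X - RX$ itself. Once this topological point is secured, the two implications are essentially symmetric approximation-plus-closedness arguments.
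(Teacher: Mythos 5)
Your proof is correct, and its engine is the same as the paper's: perturb a general multiplier to make it bounded away from zero, use the hypothesis on the perturbed multiplier, and pass to the limit using closedness of $\cA$. The difference is organizational and in one technical choice. The paper observes that (a) $\Rightarrow$ num\'eraire invariance $\Rightarrow$ (b) is trivial, so the whole lemma reduces to the single implication (b) $\Rightarrow$ (a), which it proves with the additive perturbation $R_n := R + \tfrac{1}{n}$: this lies in $L^\infty_+$, is bounded away from zero, and $R_nX - RX = \tfrac1n X \to 0$ by continuity of scalar multiplication alone. You instead prove two implications, num\'eraire invariance $\Rightarrow$ (a) (identical to the paper's argument) and (b) $\Rightarrow$ num\'eraire invariance via the truncation $R_n := R \vee \tfrac1n$, which forces you to invoke the $\sigma$-Lebesgue property together with local solidity to get $R_nX \to RX$ (your handling of that point is correct, including the separate norm estimate in the weak-star $L^\infty$ case). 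This extra machinery is avoidable: the additive perturbation works verbatim for an arbitrary $R\in L^\infty_+$, so your second implication could be replaced by the paper's single (b) $\Rightarrow$ (a) step, shortening the argument and dispensing with the lattice-convergence discussion. Logically your decomposition still yields all the stated equivalences, so nothing is missing---it is simply a slightly heavier route to the same result.
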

\begin{proof}
Clearly, we only need to prove that {\em (b)} implies {\em (a)}. To this end, fix $X\in\cA$ and take any $R\in L^\infty_+$. Moreover, consider the sequence with general term $R_n:= R+\frac{1}{n}\indic_\Omega$. Since $R_n$ is bounded away from zero for any $n\in\N$, we have $R_n X\in\cA$. Since $\cA$ is closed, the limit $RX$ must also belong to $\cA$, proving that {\em (a)} is satisfied.
\end{proof}

\smallskip

\begin{remark}
The main reason for choosing a rescaling factor $R$ to be an element of $L^\infty$ is that for any capital position $X\in \cX$ we again have $RX\in \cX$. However, the following statement is true: a set $\cA\subset \cX$ is num\'{e}raire invariant if and only if for every $R\in L^0_+$ we have
\[
X\in\cA, \ \ RX\in\cX \ \implies \ RX\in\cA\,.
\]
Indeed this property, obviously, implies num\'{e}raire invariance, since for $R\in L^\infty_+$ we always have $RX\in \cX$, hence (a) in Lemma~\ref{numeraire invariance version} is satisfied. To show the converse, note that for $R_n = \min\{R,n\}$ we have that $R_nX\in\cA$ by num\'{e}raire invariance and that $R_nX\rightarrow RX$ in $\cX$ by the $\sigma$-Lebesgue property.
\end{remark}

\smallskip

As a result of the preceding lemma, every num\'{e}raire invariant acceptance set is also surplus invariant. Moreover, these two properties are equivalent for any conical acceptance set.

\begin{proposition}
\label{proposition: numinv is equiv to supinv}
Assume $\cA\subset \cX$ is a closed acceptance set. Then, the following statements are equivalent:
\begin{enumerate}[(a)]
  \item $\cA$ is num\'{e}raire invariant;
  \item $\cA$ is conical and surplus invariant.
\end{enumerate}
\end{proposition}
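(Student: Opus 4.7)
The plan is to prove the two implications separately, leaning on the preceding Lemma~\ref{numeraire invariance version} which, because $\cA$ is closed, lets us enlarge the class of admissible multipliers from rescaling factors to arbitrary elements of $L^\infty_+$.

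For the direction (b) $\Rightarrow$ (a), the pivotal observation is that $(RX)^- = RX^-$ whenever $R \geq 0$ almost surely. Given a rescaling factor $R$, set $M := \|R\|_\infty$. Since $\cA$ is conical and $X \in \cA$, the rescaled position $MX$ belongs to $\cA$, and the bound $0 \leq R \leq M$ yields
\[
(RX)^- = R X^- \leq M X^- = (MX)^-.
\]
Surplus invariance applied to $MX$ then gives $RX \in \cA$ directly.

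For (a) $\Rightarrow$ (b), conicality is immediate: for any $\lambda > 0$ the constant $R = \lambda \indic_\Omega$ is a rescaling factor, so num\'eraire invariance forces $\lambda X \in \cA$ whenever $X \in \cA$. To derive surplus invariance I would use the equivalent formulation (c) of Proposition~\ref{proposition: alternative definitions}, which reduces the task to showing $X \in \cA \Rightarrow -X^- \in \cA$. Taking $R = \indic_{\{X<0\}}$ one has $RX = -X^-$; although this $R$ fails to be a rescaling factor (it vanishes on $\{X \geq 0\}$, a set that can have positive probability), the enlarged multiplier class granted by Lemma~\ref{numeraire invariance version}(a) guarantees $RX \in \cA$. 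This last step is the one deserving the most care: without the upgrade from strictly-positive to merely non-negative multipliers provided by the preceding lemma, one could not legitimately apply num\'eraire invariance to the indicator $\indic_{\{X<0\}}$ and the short derivation of surplus invariance would collapse.
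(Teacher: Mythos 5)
Your proposal is correct and follows essentially the same route as the paper: for (a) $\Rightarrow$ (b) both use Lemma~\ref{numeraire invariance version} to multiply by the indicator $\indic_{\{X<0\}}$, and for (b) $\Rightarrow$ (a) both combine conicity, the pointwise bound $R\le\lnorm R\rnorm_\infty$, and surplus invariance (the paper phrases this last step via $-\lnorm R\rnorm_\infty X^-\in\cA$ and monotonicity, which is only a cosmetic rearrangement of your argument). No gaps.
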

\begin{proof}
Assume first that $\cA$ is num\'{e}raire invariant. That $\cA$ is a cone is clear from the definition of num\'{e}raire invariance. Now, take $X\in\cA$. Then, the previous lemma yields $X1_{X<0}\in\cA$, which implies surplus invariance. Assume now that $\cA$ is conical and surplus invariant. Take $X\in\cA$ and recall that $-X^-\in\cA$. For any rescaling factor $R$ we have $-RX^-\ge -\lnorm R\rnorm_\infty X^-$. As a consequence of conicity and monotonicity, it follows that $-RX^-\in\cA$. Hence $RX\in\cA$, proving that $\cA$ is num\'{e}raire invariant.
\end{proof}

\smallskip

In light of the close link between num\'{e}raire and surplus invariance, we can use the preceding results to obtain the following representation of num\'{e}raire invariant acceptance sets.

\begin{theorem}
\label{corollary: coherent numinv}
Assume $\cA\subset\cX$ is a closed, convex acceptance set. Then, $\cA$ is num\'{e}raire invariant if and only if there exists $A\in\cF$ such that $\probp(A)>0$ and
\[
\cA = \cX_+(A)\oplus \cX(A^c)\,.
\]
\end{theorem}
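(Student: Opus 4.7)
The plan is to deduce this essentially as a corollary of the two prior results: Proposition~\ref{proposition: numinv is equiv to supinv}, which reduces num\'eraire invariance (under closedness) to the combination of conicity and surplus invariance, and Theorem~\ref{main theorem: coherent case}, which classifies closed, coherent, surplus-invariant acceptance sets. Since convexity is built into the hypothesis, there is almost no extra content to prove beyond chaining these two statements, so I expect the ``hard'' part is only a sanity-check of the direct implication.

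For the forward direction, I would proceed as follows. Assume $\cA$ is closed, convex and num\'eraire invariant. By Proposition~\ref{proposition: numinv is equiv to supinv}, $\cA$ is conical and surplus invariant. Since a convex cone is coherent, $\cA$ is therefore a closed, coherent, surplus-invariant acceptance set. Theorem~\ref{main theorem: coherent case} then yields a set $A\in\cF$ with $\probp(A)>0$ such that $\cA=\cX_+(A)\oplus\cX(A^c)$, which is precisely the claimed form.

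For the reverse direction, I would verify num\'eraire invariance by hand. Suppose $\cA=\cX_+(A)\oplus\cX(A^c)$ for some $A\in\cF$ with $\probp(A)>0$, equivalently $\cA=\{X\in\cX\,;\ 1_AX\ge 0\}$. Fix a rescaling factor $R\in L^\infty_+$ and any $X\in\cA$. By the module condition~\eqref{module condition}, the product $RX$ lies in $\cX$, and on the event $A$ we have $1_A(RX)=R\,(1_AX)\ge 0$ because $R\ge 0$ and $1_AX\ge 0$. Hence $RX\in\cA$, and $\cA$ is num\'eraire invariant. Convexity, closedness, and monotonicity of this $\cA$ are immediate, so nothing further needs to be checked to conclude that $\cA$ is a closed, convex acceptance set of the stated form.

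Thus the only things I need are assumption~\eqref{module condition} to keep $RX$ in $\cX$, Proposition~\ref{proposition: numinv is equiv to supinv} to convert num\'eraire invariance into conicity plus surplus invariance, and Theorem~\ref{main theorem: coherent case} to conclude the structural representation. No new estimates, approximation arguments, or duality computations are required.
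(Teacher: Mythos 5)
Your proposal is correct and follows essentially the same route as the paper: the paper's proof likewise deduces the result by combining Proposition~\ref{proposition: numinv is equiv to supinv} (numéraire invariance $\Leftrightarrow$ conical plus surplus invariant, under closedness) with Theorem~\ref{main theorem: coherent case}. Your explicit hand-verification of the ``if'' direction is harmless but not needed, since both cited results are stated as equivalences and already cover it.
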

\begin{proof}
By Proposition \ref{proposition: numinv is equiv to supinv}, the acceptance set $\cA$ is surplus invariant and conical, hence coherent. Therefore, the claim follows at once from Theorem~\ref{main theorem: coherent case}.
\end{proof}


\section{Final comments}

The paper has undertaken an analysis of very specific aspects --- surplus invariance and num\'{e}raire invariance --- of the simple classical model which, despite its rudimentary nature, underlies the bulk of the risk measures literature in a one-period setting. This model identifies a financial institutions with its net capital position --- assets less liabilities --- and has the feature that any surplus accrues to the owners of the institution and any costs of default are fully borne by liability holders. As a result, liability holders can only experience one of the following two fates:
\begin{enumerate}[(a)]
\item the institution has sufficient funds to honour its obligations, in which case the liability holders are paid what they are owed (but nothing more); or,
\item the institution has a shortfall of funds and defaults on its liabilities so that liability holders are paid either nothing or only part of what is owed to them.
\end{enumerate}
Within this particular model, the only benefit accruing to liability holders is the amount they get paid at maturity. It follows that, in this setting, surplus invariance is a reasonable property to ask from the perspective of liability holders. The structural results on surplus invariance are nothing but the logical consequence of accepting the premises of the model and the financial contribution of our paper is mainly to highlight specific implications of the classical model. In the real world, however, there may be potential benefits to liability holders (or society at large) that may be attributable to institutions running a surplus. Hence, one should be cautious when deriving practical implications from our results. But there are important theoretical implications. Indeed, our results should prompt a questioning of the classical model's premises and should motivate the search for alternative models that have the ability to capture these potential benefits. In this case, one should find specific and plausible economic reasons to amend the model in one or the other direction. One possibility would be to consider a multi-period model which explicitly accounts for benefits to liability holders (or society at large) that are attributable to the sustainability of a financial institution and which can, therefore, be argued to be linked to profitability and, hence, to surplus. Alternatively, one could also question whether equating a company with its capital positions (assets less liabilities) is the most reasonable course of action to take. Indeed, one may want to explore acceptability criteria that operate on assets-liability pairs $(A,L)$ in a more subtle way than just through the net capital position
\[
X = A-L\,.
\]
We hope that our results may generate interest in this line of investigation.


\end{document}